\newtheorem{DE}{Definition}[section]
\newenvironment{proof}{\begin{trivlist} \item[] {\em Proof:}}{\hfill $
\Box$
                       \end{trivlist}}
\newcommand{\qed}{\relax\ifmmode\hskip2em\Box\else\unskip\nobreak\hfill$\Box$\fi}
\newtheorem{theorem}[DE]{Theorem}
\newtheorem{lemma}[DE]{Lemma}
\newtheorem{corollary}[DE]{Corollary}
\theoremstyle{break}\theorembodyfont{\rmfamily}}
\begin{document}

\title{Structure and algorithms for (cap, even hole)-free graphs}

\author{%
Kathie Cameron\thanks{Department of Mathematics, Wilfrid Laurier University,
Waterloo, ON, Canada, N2L 3C5. Research supported by the Natural Sciences and
Engineering Research Council of Canada (NSERC).}
\and%
Murilo V. G. da Silva\thanks{Department of Computer Science, Federal University of Technology-Paran\'{a}, Curitiba, Brazil.}
\and%
Shenwei Huang\thanks{School of Computer Science and Engineering,
University of New South Wales,  Sydney, NSW 2052, Australia.}
\and%
Kristina Vu\v{s}kovi\'c\thanks{School of Computing, University of
Leeds, Leeds LS2 9JT, UK; and Faculty of Computer Science (RAF), Union
University, Knez Mihajlova 6/VI, 11000 Belgrade, Serbia.  Partially
supported by EPSRC grant EP/N019660/1 and Serbian Ministry of
Education, Science and Technological Development projects 174033 and III44006.}}

\date{}

\maketitle

\begin{abstract}
A graph is even-hole-free if it has no induced even cycles of length 4 or more. A cap is a cycle of length at least
5 with exactly one chord and that chord creates a triangle with the cycle. In this paper, we consider (cap, even hole)-free graphs,
and more generally, (cap, 4-hole)-free odd-signable graphs. We give an explicit construction of these graphs. We prove that
every such graph $G$ has a vertex of degree at most
$\frac{3}{2}\omega (G) -1$, and hence $\chi(G)\leq \frac{3}{2}\omega (G)$, where $\omega(G)$ denotes the size of a largest clique in $G$
and $\chi(G)$ denotes the chromatic number of $G$.  We give an $O(nm)$ algorithm for $q$-coloring these graphs
for fixed $q$ and an $O(nm)$ algorithm for maximum weight stable set.  We also give a polynomial-time algorithm for minimum coloring.

Our algorithms are based on our results that triangle-free odd-signable graphs have treewidth at most 5 and thus
have clique-width at most 48, and that (cap, 4-hole)-free odd-signable graphs $G$ without clique cutsets have treewidth at most $6\omega(G)-1$
and clique-width at most 48.

\end{abstract}

\noindent{\bf Keywords}: even-hole-free graph, structure theorem, decomposition, combinatorial optimization, coloring, 
maximum weight stable set, treewidth, clique-width

\section{Introduction}

In this paper all graphs are finite and simple.
We say that a graph $G$ {\em contains} a graph $F$, if $F$ is
isomorphic to an induced subgraph of $G$.  A graph $G$ is
{\em $F$-free} if it does not contain $F$, and for a family of graphs ${\cal F}$,
$G$ is {\em ${\cal F}$-free} if $G$ is $F$-free for every $F\in {\cal F}$.
A {\em hole} is a chordless cycle of length at least four.
A hole is {\em even} (respectively, {\em odd}) if it contains an even (respectively, odd)
number of vertices.
A {\em cap} is a graph that consists of a hole $H$ and a vertex $x$ that has exactly
two neighbors in $H$, that are furthermore adjacent.
The graph $C_n$ is a hole of length $n$, and is also called an {\em $n$-hole}.
In this paper we study the class of (cap, even hole)-free graphs, and more generally
the class of (cap, 4-hole)-free odd-signable graphs, which we define later.

Let $G$ be a graph.
A set $S\subseteq V(G)$ is a {\em clique} of $G$ if all pairs of vertices of $S$ are adjacent.
The size of a largest clique in a graph $G$ is denoted by $\omega (G)$, and is
sometimes called the {\em clique number} of $G$.
We say that $G$ is a {\em complete graph} if $V(G)$ is a clique.
We denote by $K_n$ the complete graph on $n$ vertices.
The graph $K_3$ is also called a {\em triangle}.
A set $S \subseteq V(G)$ is a  {\em stable set} of $G$ if no two vertices
of $S$ are adjacent.
The size of a largest stable set of $G$ is denoted by $\alpha (G)$.
A {\em $q$-coloring} of $G$ is a function $c:V(G)\longrightarrow \{ 1, \ldots ,q\}$, such that
$c(u)\neq c(v)$ for every edge $uv$ of $G$.
The {\em chromatic number} of a graph $G$, denoted by
$\chi (G)$, is the minimum number $q$ for which there exists a $q$-coloring of $G$.



The class of (cap, odd hole)-free graphs has been studied extensively in
literature. This is precisely the class of Meyniel graphs, where a
graph $G$ is {\em Meyniel} if every odd length cycle of $G$,
that is not a triangle, has at least two chords.
These graphs were proven to be perfect by Meyniel \cite{meyniel}, and
Markosyan and Karapetyan \cite{mk}.
Burlet and Fonlupt \cite{bf} obtained the first
polynomial-time recognition algorithm
for Meyniel graphs, by decomposing these  graphs with amalgams
(that they introduced in the same paper).
Subsequently, Roussel and Rusu \cite{rr} obtained a faster algorithm for
recognizing Meyniel graphs (of complexity
$O(m^2)$), that is not decomposition-based.
Hertz \cite{hertz} gave an $O(nm)$ algorithm
for coloring and obtaining a largest clique of a Meyniel graph.
This algorithm is based on contractions
of even pairs. It is an improvement
on the $O(n^8)$ algorithm of Ho\`ang \cite{hoangM}.
Roussel and Rusu \cite{rr2} gave an $O(n^2)$ algorithm that colors
a Meyniel graph without using even pairs.
This algorithm ``simulates'' even pair contractions and it is based on
lexicographic breadth-first search and greedy sequential coloring.

Algorithms have also been given which find a minimum coloring of a Meyniel
graph, but do not require that the input graph be known to be Meyniel. A
\emph{Meyniel obstruction} is an induced subgraph which is an odd
cycle with at most one chord. A {\em strong stable set} in a
graph $G$ is a stable set which intersects every
(inclusion-wise) maximal clique of $G$. Cameron and Edmonds \cite{ce} gave an
$O(n^2)$ algorithm which for any graph, finds either a strong stable set or
a Meyniel obstruction. This algorithm can be applied at most $n$ times to find, in any graph,
either a clique and coloring of the same size or a Meyniel obstruction. Cameron,
L\'ev\^eque and Maffray \cite{clm} showed that a variant of the Roussel-Rusu coloring
algorithm for Meyniel graphs \cite{rr2} can be enhanced to find, for any input graph,
either a clique and coloring of the same size or a Meyniel obstruction. The worst-case
complexity of the algorithm is still $O(n^2)$.

In \cite{cckv-cap}, Conforti, Cornu\'ejols, Kapoor and Vu\v{s}kovi\'c
generalize Burlet and Fonlupt's decomposition theorem for Meyniel graphs
\cite{bf} to the decomposition by amalgams of all cap-free graphs.
One of the consequences of this theorem are polynomial-time recognition
algorithms for cap-free odd-signable graphs and  (cap, even hole)-free graphs.
Since triangle-free graphs are cap-free, it follows that the problems
of coloring and of finding the size of a largest stable set are both
NP-hard for cap-free graphs.
In \cite{cgp}, Conforti, Gerards and Pashkovich show how to obtain a polynomial-time
algorithm for solving the maximum weight stable set problem
on any class of graphs that is decomposable by amalgams into basic graphs
for which one can solve the
maximum weight stable set problem in polynomial time.
This leads to a first known non-polyhedral algorithm for maximum weight stable set problem
for Meyniel graphs.
Furthermore, using the decomposition theorems from \cite{cckv-cap} and \cite{cckv-tf},
they obtain a polynomial-time algorithm for solving the maximum weight
stable set problem  for
(cap, even hole)-free graphs (and more generally
cap-free odd-signable graphs).
For a survey on even-hole-free graphs and odd-signable graphs, see \cite{v}.

Aboulker, Charbit, Trotignon and Vu\v{s}kovi\'c \cite{actv} gave an $O(nm)$-time
algorithm whose input is a weighted graph $G$ and whose output is a maximum weighted clique of $G$
or a certificate proving that $G$ is not 4-hole-free odd-signable (the crux of this algorithm was actually developed
by da Silva and Vu\v{s}kovi\'c in \cite{daSV}).

In Section \ref{construction}, we give an explicit construction of (cap, 4-hole)-free odd-signable graphs,
based on \cite{cckv-cap} and \cite{cckv-tf}.
From this, in Section \ref{bound}, we derive
that every such graph $G$ has a vertex of degree at most
$\frac{3}{2}\omega (G) -1$, and hence $\chi(G)\leq \frac{3}{2}\omega (G)$. It follows that
$G$ can be colored with at most $\frac{3}{2}\omega (G)$ colors using the greedy coloring algorithm.

In Section \ref{cwtw}, we prove that triangle-free odd-signable graphs have treewidth at most 5 and thus
have clique-width at most 48 \cite{cr}. We also prove that (cap, 4-hole)-free odd-signable graphs $G$ without clique cutsets have
clique-width at most 48 and treewidth at most $6\omega(G)-1$.

In Section \ref{algorithms}, we give an $O(nm)$ algorithm for $q$-coloring
(cap, 4-hole)-free odd-signable graphs. We give a (first known)
polynomial-time algorithm for finding a minimum coloring of these graphs (chromatic number).
We also obtain an $O(nm)$  algorithm for the maximum weight
stable set problem for (cap, 4-hole)-free odd-signable graphs.
We observe that the algorithm in \cite{cgp} proceeds by first decomposing the graph by amalgams, a step that
takes $O(n^4m)$ time ($O(n^2m)$ to find an amalgam \cite{cc}, which is called on $O (n^2)$ times)
and creates $O (n^2)$ blocks that are then processed further.
Finally, we observe that all our algorithms are robust.

It is known that planar even-hole-free graphs have treewidth at most 49 \cite{sss}.
We observe that (cap, even hole)-free graphs are not necessarily planar.
Note that the graph in Figure \ref{figK5} is (triangle, even hole)-free and has
a $K_5$-minor.

The complexity of the stable set problem and the coloring problem remains open for even-hole-free graphs.




\begin{figure}
\center
\tikzstyle{every node}=[circle, draw, inner sep=0pt, minimum width=4pt]
\begin{tikzpicture}[scale=0.4]
\node [draw, circle](v1) at (0,5) {};
\node [draw, circle] (v2) at (4.5,1.5) {};
\node [draw, circle] (v3) at (4.5,-4) {};
\node [draw, circle] (v4) at (0,3.5) {};
\node [draw, circle] (v5) at (1,2.5) {};
\node [draw, circle] (v6) at (2,1.5) {};
\draw (v1) edge (v2);
\draw (v2) edge (v3);
\draw (v1) edge (v4);
\draw (v4) edge (v5);
\draw (v5) edge (v6);
\draw (v6) edge (v2);
\node [draw, circle] (v7) at (-1,2.5) {};
\node [draw, circle] (v8) at (-2,1.5) {};
\draw (v4) edge (v7);
\draw (v7) edge (v8);
\node [draw, circle] (v9) at (-4.5,1.5) {};
\draw (v1) edge (v9);
\draw (v9) edge (v8);
\node [draw, circle] (v10) at (-4.5,-4) {};
\draw (v9) edge (v10);
\draw (v10) edge (v3);
\node [draw, circle] (v13) at (2,-2.5) {};
\node [draw, circle] (v11) at (-2,-2.5) {};
\node [draw, circle] (v12) at (0,-2.5) {};
\draw (v10) edge (v11);
\draw (v11) edge (v12);
\draw (v13) edge (v12);
\draw (v13) edge (v3);
\node [draw, circle] (v14) at (2,0.5) {};
\node [draw, circle] (v15) at (2,-1) {};
\draw (v2) edge (v14);
\draw (v14) edge (v15);
\draw (v15) edge (v13);
\node [draw, circle] (v18) at (6,3) {};
\node [draw, circle] (v17) at (3.5,5) {};
\node [draw, circle] (v16) at (1,7) {};
\draw (v1) edge (v16);
\draw (v16) edge (v17);
\draw (v17) edge (v18);
\draw (v18) edge (v2);
\node [draw, circle] (v21) at (-1,7) {};
\node [draw, circle] (v20) at (-3.5,5) {};
\node [draw, circle] (v19) at (-6,3) {};
\draw (v9) edge (v19);
\draw (v19) edge (v20);
\draw (v20) edge (v21);
\draw (v21) edge (v1);
\node [draw, circle] (v23) at (6,-4) {};
\node [draw, circle] (v22) at (6,-0.5) {};
\draw (v18) edge (v22);
\draw (v22) edge (v23);
\draw (v23) edge (v3);
\node [draw, circle] (v25) at (-6,-4) {};
\node [draw, circle] (v24) at (-6,-0.5) {};
\draw (v24) edge (v25);
\draw (v25) edge (v10);
\draw (v24) edge (v19);
\node [draw, circle] (v26) at (4.5,-5.5) {};
\node [draw, circle] (v27) at (0,-5.5) {};
\node [draw, circle] (v28) at (-4.5,-5.5) {};
\draw (v3) edge (v26);
\draw (v26) edge (v27);
\draw (v27) edge (v28);
\draw (v28) edge (v10);
\node [draw, circle] (v29) at (-7.5,-5.5) {};
\node [draw, circle] (v30) at (-7.5,-0.5) {};
\draw (v28) edge (v29);
\draw (v29) edge (v30);
\draw (v30) edge (v9);
\end{tikzpicture}
\caption{\label{figK5} A (triangle even hole)-free graph that has a $K_5$-minor.}
\end{figure}
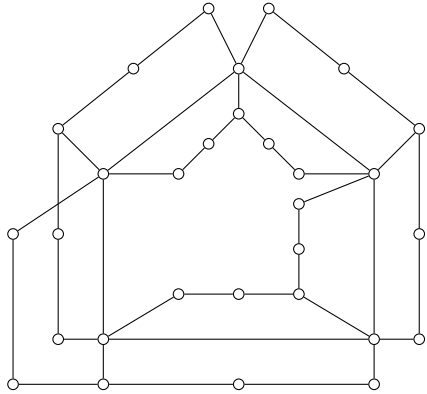



\section{Odd-signable graphs} \label{odd-signable}

We {\em sign} a graph by assigning $0,1$ weights to its edges.
A graph is {\em odd-signable} if there exists a signing that makes
 every chordless cycle odd weight. To characterize
odd-signable graphs in terms of excluded induced subgraphs, we now
introduce two types of {\em 3-path configurations} ($3PC$'s)
and even wheels.

Let $x$ and $y$ be two distinct vertices of $G$. A $3PC(x,y)$ is a graph
induced by three chordless $xy$-paths, such that any two of them induce
a hole. We say that a graph $G$ contains a $3PC(\cdot , \cdot )$ if
it contains a $3PC(x,y)$ for some $x,y \in V(G)$. The $3PC(\cdot , \cdot)$'s
are also known as {\em thetas}.

Let $x_1,x_2,x_3,y_1,y_2,y_3$ be six distinct vertices of $G$ such that
$\{ x_1,x_2,x_3\}$ and $\{ y_1,y_2,y_3\}$ induce triangles.
A $3PC(x_1x_2x_3,y_1y_2y_3)$ is a graph induced by three chordless paths
$P_1=x_1, \ldots ,y_1$, $P_2=x_2, \ldots ,y_2$ and $P_3=x_3, \ldots ,y_3$,
such that any two of them induce a hole. We say that a graph $G$ contains a
$3PC(\Delta , \Delta )$ if it contains a $3PC(x_1x_2x_3,y_1y_2y_3)$ for some
$x_1,x_2,x_3,y_1,y_2,y_3 \in V(G)$. The $3PC(\Delta , \Delta)$'s are also known
as {\em prisms}.

A {\em  wheel}, denoted by $(H,x)$, is  a graph induced by  a hole $H$
and a  vertex $x \not\in V(H)$  having at least three  neighbors in $H$,
say $x_1, \ldots  ,x_n$.
A
subpath of  $H$ connecting  $x_i$ and  $x_j$ is a  {\em sector}  if it
contains no intermediate  vertex $x_l$, $1 \leq l \leq  n$.
A wheel $(H,x)$ is {\em even} if it has an even number of sectors.

It  is easy to  see that  even wheels, thetas
and prisms cannot be  contained in even-hole-free
graphs. In fact they cannot be contained in odd-signable graphs.
The following characterization of odd-signable graphs states that
the converse  also holds, and it is an easy consequence of a theorem of
Truemper \cite{truemper}.

\begin{theorem} {\em\bf (\cite{cckv-cap})} \label{thm:forbid odd-signable}
A graph is odd-signable if and only if it does not contain
an even wheel, a theta or a prism.
\end{theorem}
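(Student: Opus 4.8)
The plan is to deduce both directions from Truemper's theorem \cite{truemper}, which I would use in the following form: given a graph $G$ and a prescribed parity $\beta(C)\in\{0,1\}$ for every hole and every triangle $C$ of $G$, there is a signing $s$ of $G$ with $s(C)\equiv\beta(C)\pmod 2$ for all such $C$ if and only if, for every wheel and every $3$-path configuration (theta, pyramid, or prism) $\Sigma$ of $G$, the sum of the $\beta$-values over a short explicit list of cycles attached to $\Sigma$ is even. Here a \emph{pyramid} is a $3PC(x,y_1y_2y_3)$: a vertex $x$, a triangle $y_1y_2y_3$, and three chordless $xy_i$-paths pairwise inducing holes; note pyramids are not among the excluded configurations. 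To prove the theorem I would apply Truemper's theorem with $\beta\equiv 1$ on \emph{every} hole and \emph{every} triangle of $G$; since every chordless cycle of $G$ is a hole or a triangle, a realizing signing is precisely a signing witnessing that $G$ is odd-signable.

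Next I would record the relevant local lists and evaluate them under the all-ones prescription, each a one-line parity count. For a theta $3PC(x,y)$ with paths $P_1,P_2,P_3$, the three holes $P_i\cup P_j$ have signed weights summing to $2(w(P_1)+w(P_2)+w(P_3))$, hence to an even number, so assigning all three the value $1$ is inconsistent. For a prism, the three inner holes together with the two triangles form five cycles whose signed weights always sum to an even number (the path contributions cancel in pairs and each triangle is counted twice), so the all-ones prescription on these five cycles is inconsistent. For a wheel $(H,x)$ with $k$ sectors, the $k$ sector cycles $x\cup S_i$ have signed weights summing, modulo $2$, to $w(H)$ (the spoke weights cancel in pairs); together with the rim this gives $k+1$ cycles whose weights always sum to an even number, so the all-ones prescription on them is consistent exactly when $k+1$ is even, i.e.\ exactly when the wheel is not even. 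Finally, for a pyramid $3PC(x,y_1y_2y_3)$ the three inner holes $P_i\cup P_j\cup\{y_iy_j\}$ together with the triangle $y_1y_2y_3$ form four cycles whose weights always sum to an even number, so the all-ones prescription is consistent.

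With these facts the theorem follows. For the ``only if'' direction, if $G$ contains a theta, a prism, or an even wheel, then by the computations above the all-ones prescription already violates the local consistency condition on that one configuration, so by Truemper's theorem no signing makes every hole and every triangle odd; in particular $G$ is not odd-signable. (This is the direction the introduction calls ``easy to see'', and it could equally be argued directly from the same parity counts without invoking Truemper.) For the ``if'' direction, if $G$ contains no theta, no prism, and no even wheel, then the only Truemper configurations occurring in $G$ are pyramids and odd wheels, and by the last two computations the all-ones prescription satisfies the local consistency condition for each of these; hence Truemper's theorem yields a signing making every hole and every triangle odd, so $G$ is odd-signable.

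I expect the only real obstacle to be bookkeeping rather than ideas: pinning down the exact list of cycles that Truemper's consistency condition attaches to a wheel and to each $3$-path configuration, and keeping straight that ``odd-signable'' as defined here demands odd weight on triangles as well as on holes (which is why $\beta$ must be taken to be $1$ on triangles, and why the prism and pyramid conditions come out with an extra triangle in the count). Once Truemper's theorem is quoted in the correct form, every verification reduces to a parity count over three or four cycles, which is exactly why the statement is an easy consequence.
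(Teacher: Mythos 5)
Your proposal is correct and follows exactly the route the paper indicates: the paper gives no proof, citing the result as an easy consequence of Truemper's theorem, and your derivation (apply Truemper with the all-ones prescription on every hole and triangle, then check the single parity dependency on each theta, prism, pyramid and wheel) is the standard argument from the cited source \cite{cckv-cap}. The parity counts for all four configurations are right, including the observation that pyramids and odd wheels pass the test while thetas, prisms and even wheels fail.
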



\section{Construction of (cap, 4-hole)-free odd-signable graphs} \label{construction}

Let $G$ be a graph and $S\subseteq V(G)$.
The subgraph of $G$ induced by $S$ is denoted by $G[S]$, and $G\setminus S=G[V(G)\setminus S]$.
We say that $S$ is a {\em vertex cutset} of $G$ if $G\setminus S$ is disconnected.
A {\em clique cutset} of $G$ is a vertex cutset that is a clique of $G$. Note that an empty set is a clique,
and hence every disconnected graph has a clique cutset.

Let $A$ and $B$ be disjoint subsets of vertices of a graph $G$. We say that $A$ is {\em complete} to $B$
if every vertex of $A$ is adjacent to every vertex of $B$, and $A$ is {\em anticomplete} to $B$ if no vertex
of $A$ is adjacent to a vertex of $B$.

A connected graph $G$ has an {\em amalgam} $(V_1,V_2,A_1,A_2,K)$ if
the following hold:
\begin{itemize}
\item $V(G)=V_1 \cup V_2 \cup K$, where $V_1,V_2$ and $K$ are disjoint sets, and  $|V_1| \geq 2$, $|V_2| \geq 2$.
\item $K$ is a (possibly empty) clique of $G$.
\item For $i=1,2$, $\emptyset \neq A_i \subseteq V_i$.
\item $A_1$ is complete to $A_2$, and these are the only edges with one  end in $V_1$ and the other in $V_2$.
\item $K$ is complete to $A_1 \cup A_2$ (note that vertices of $K$ may have other neighbors in $V_1\cup V_2$).
\end{itemize}

A graph is {\em chordal} if it is hole-free.
A graph is {\em 2-connected} if it has at least 3 vertices and remains connected whenever fewer than 2 vertices are removed.
A {\em basic cap-free graph} $G$ is either a chordal graph or a
2-connected triangle-free graph together with at most one additional
vertex, that is adjacent to all other vertices of $G$.

\begin{theorem}\label{cap-decomp} {\em\bf (\cite{cckv-cap})}
A connected cap-free graph that is not basic has an amalgam.
\end{theorem}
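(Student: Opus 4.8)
\medskip\noindent\emph{Proof idea.}
Let $G$ be a connected cap-free graph that is not basic. Since every chordal graph is basic, $G$ is not chordal and hence contains a hole; fix one, say $H$, chosen with an appropriate extremal property (for instance of maximum length), and note $|V(G)|\ge 4$. First I would dispose of the cutvertex case: if $v$ is a cutvertex, partition the components of $G\setminus v$ into two nonempty groups $V_1'$ and $V_2$ with $|V_2|\ge 2$ (possible since $|V(G)|\ge 4$); then with $V_1=\{v\}\cup V_1'$, $A_1=\{v\}$, $A_2=N(v)\cap V_2$ and $K=\emptyset$, the only edges between $V_1$ and $V_2$ are those at $v$, so $(V_1,V_2,A_1,A_2,K)$ is an amalgam. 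Hence we may assume $G$ is $2$-connected; and since a $2$-connected triangle-free graph is by definition basic while $G$ is not, $G$ contains a triangle.

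The core of the proof is a case analysis of how $G\setminus V(H)$ attaches to $H$, and the single fact that drives it is cap-freeness: no vertex of $V(G)\setminus V(H)$ may have exactly two neighbours on $H$ that are consecutive on $H$, for that would produce a cap. I would classify the connected components $C$ of $G\setminus V(H)$ by their attachments $N(C)\cap V(H)$ --- distinguishing ``local'' components, confined (subject to this restriction) to one short subpath of $H$, from ``spread-out'' ones --- and in parallel examine the vertices having three or more neighbours on $H$. A local component $C$ is shown either to allow rerouting a subpath of $H$ through $C$ into a hole that contradicts the extremal choice of $H$, or to expose a homogeneous set $M$ with $2\le|M|\le|V(G)|-2$, in which case $(M,\,V(G)\setminus M,\,M,\,N(M),\,\emptyset)$ is an amalgam. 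For a spread-out component, or a vertex with many neighbours on $H$, one selects a maximal subpath $P$ of $H$ that it ``controls'', lets $K$ be the clique of all vertices complete to $P$ and to the relevant part of that component, takes $V_1$ to consist of the component together with the part of $H$ it sees on one side of $P$, $V_2$ the rest, and builds $A_1$ and $A_2$ from the two ends of $P$; one then checks that $K$ is a clique, that $A_1$ is complete to $A_2$, and that there are no further edges between $V_1$ and $V_2$, which gives an amalgam with $K\neq\emptyset$. The triangle found above is precisely what excludes the alternative --- that $H$ is attached only in the sparse way cap-freeness permits for triangle-free graphs --- since in that case, the decompositions above being unavailable, $G$ would be forced to be basic.

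The main obstacle is pushing this analysis through every configuration and certifying that nothing escapes it: one must verify that if no local component produces a rerouted hole or a homogeneous set, and no spread-out attachment produces a usable clique, then $G$ is genuinely chordal or a $2$-connected triangle-free graph with at most one universal vertex added. The delicate subcases are short holes, in particular $C_5$ and $C_6$ where there is little room to maneuver, and configurations in which several components attach to $H$ along the same short subpath and must be amalgamated together, so that the resulting quintuple $(V_1,V_2,A_1,A_2,K)$ really satisfies all five requirements --- above all ``$A_1$ complete to $A_2$, and these are the only edges between $V_1$ and $V_2$'' and ``$K$ complete to $A_1\cup A_2$''. It is this careful bookkeeping of attachments, rather than any single decisive idea, where the real work lies.
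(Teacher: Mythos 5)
First, for context: the paper does not prove Theorem~\ref{cap-decomp} --- it is imported from \cite{cckv-cap}, where it is the main decomposition theorem and its proof occupies several sections --- so your sketch must be measured against that argument. Your opening moves are fine: the cutvertex amalgam $(\{v\}\cup V_1',\,V_2,\,\{v\},\,N(v)\cap V_2,\,\emptyset)$ is correct, and so is the conclusion that the remaining graph is $2$-connected and contains both a hole and a triangle. The gap is in the core. The engine of the actual proof is not ``cap-freeness applied to one extremal hole $H$'' but a much stronger consequence of cap-freeness applied to many derived holes at once (Lemma~5.1 of \cite{cckv-cap}, echoed in this paper's Lemma~\ref{cfehf-decomp2} and in the Claim inside the proof of Theorem~\ref{t1}): if a vertex $d$ has a pair of adjacent neighbours in a maximal $2$-connected triangle-free induced subgraph $F$, then either $d$ together with some hole of $F$ forms an \emph{expanded hole} --- which yields an amalgam whose side $A_2$ is an entire bag $S_2$ of the expanded hole, with $A_1$ its neighbouring bags and $K$ the vertices complete to the whole expanded hole --- or $d$ is complete to $F$, and the vertices of the latter kind form the universal clique of the basic outcome. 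Establishing this dichotomy requires invoking cap-freeness on holes passing through $d$ and through the attaching components, not merely on the fixed hole $H$; your stated ``single driving fact'' (no vertex has exactly two consecutive neighbours on $H$) is far too weak to force it.

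Second, the specific dichotomies you propose do not hold as stated. A ``local'' component need not yield either a longer hole or a homogeneous set: in a $2$-connected triangle-free cap-free graph (which is basic and may admit no amalgam at all), the components of $G\setminus V(H)$ attach to a longest hole $H$ in many ways that produce neither, so nothing read off from a single component's attachment to $H$ can be decisive --- the triangle that makes $G$ non-basic may live far from both $H$ and the component under examination, and the sketch never explains how its presence is transported to the place where the amalgam is built. Likewise, building $A_1$ and $A_2$ ``from the two ends of a controlled subpath $P$ of $H$'' cannot in general satisfy ``$A_1$ complete to $A_2$ and these are the only $V_1$--$V_2$ edges'': the correct $A_2$ is an equivalence class of vertices with identical attachment (a bag of a maximal expanded hole), not an endpoint of a path. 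I would restructure the argument around a maximal $2$-connected triangle-free induced subgraph $F$, as in Theorem~\ref{t1}, proving the expanded-hole-versus-complete dichotomy first; as it stands the proposal assembles the right ingredients but is missing the mechanism that makes them interact.
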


A {\em module} (or {\em homogeneous set})
in a graph $G$ is a set $M \subseteq V(G)$, such that
$2\leq |M|\leq |V(G)|-1$
and every vertex of $V(G) \setminus M$ is either adjacent to all of $M$ or
none of $M$.
Note that a module with $2\leq |M|\leq |V(G)|-2$ (i.e. a {\em proper module})
is a special case of an amalgam.
A {\em clique module} is a module that induces a clique.

Let $G$ be a 4-hole-free graph and $(V_1,V_2,A_1,A_2,K)$
an amalgam of $G$. Without loss of generality we may assume that $A_1$ induces a clique, and hence
either $A_1 \cup K$ is a clique cutset or $A_1$ is a proper clique module.
So, Theorem \ref{cap-decomp} particularized to 4-hole-free graphs,
and the well known fact that every chordal graph that is
not a clique has a clique cutset, gives the
following decomposition.

\begin{theorem}\label{cfehf-decomp} 
If $G$ is (cap, 4-hole)-free graph,
then $G$ has a clique cutset or a proper clique module, or
$G$ is a complete graph or
a (triangle, 4-hole)-free graph together with at most one
additional vertex that is adjacent to all other vertices of $G$.
\end{theorem}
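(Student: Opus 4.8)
The plan is to reduce to the connected case and then split according to whether $G$ is basic in the sense of Theorem~\ref{cap-decomp}. If $G$ is disconnected then the empty set is a clique cutset and we are done, so assume $G$ is connected. Every (cap, 4-hole)-free graph is in particular cap-free, so Theorem~\ref{cap-decomp} applies: either $G$ is a basic cap-free graph, or $G$ has an amalgam.

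First I would dispose of the basic case. A basic cap-free graph is by definition either chordal, or a 2-connected triangle-free graph together with at most one additional vertex adjacent to all other vertices. If $G$ is chordal, then either $G$ is a complete graph, or, by the classical fact that a chordal graph that is not complete has a clique cutset, $G$ has a clique cutset; in both cases we land in one of the stated conclusions. If $G$ consists of a 2-connected triangle-free graph $H$ plus at most one universal vertex, then $H$ is an induced subgraph of $G$ and hence 4-hole-free, so $H$ is (triangle, 4-hole)-free and $G$ already has the required form.

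Next I would treat the case where $G$ admits an amalgam $(V_1,V_2,A_1,A_2,K)$. The key point is that, since $G$ is 4-hole-free, at least one of $A_1$, $A_2$ induces a clique: otherwise we could pick nonadjacent $a_1,a_1'\in A_1$ and nonadjacent $a_2,a_2'\in A_2$, and since $A_1$ is complete to $A_2$ the four vertices $a_1,a_2,a_1',a_2'$ would induce a 4-hole. So assume without loss of generality that $A_1$ is a clique; as $K$ is a clique complete to $A_1$, the set $A_1\cup K$ is a clique. Now split on whether $V_1\setminus A_1$ is empty. If $V_1\setminus A_1\neq\emptyset$, then because the only edges between $V_1$ and $V_2$ join $A_1$ to $A_2$, deleting $A_1\cup K$ leaves no edge between the nonempty set $V_1\setminus A_1$ and $V_2$ (nonempty since $|V_2|\ge 2$), so $A_1\cup K$ is a clique cutset. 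If instead $V_1=A_1$, then $A_1$ is a module: every vertex of $A_2\cup K$ is complete to $A_1$, while every vertex of $V_2\setminus A_2$ is anticomplete to $A_1=V_1$; moreover $2\le |A_1|=|V_1|\le |V(G)|-2$ because $|V_2|\ge 2$, so $A_1$ is a proper clique module.

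The argument is short once Theorem~\ref{cap-decomp} is available, and I do not expect a genuine obstacle; the only place that needs care is the amalgam case — the 4-hole argument forcing $A_1$ or $A_2$ to be a clique, and then checking the two subcases ($A_1\cup K$ a clique cutset versus $A_1$ a proper clique module) directly from the amalgam axioms.
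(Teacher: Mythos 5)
Your proof is correct and follows essentially the same route as the paper, which derives the theorem from Theorem~\ref{cap-decomp} by observing that in a 4-hole-free amalgam one of $A_1,A_2$ must be a clique (else a 4-hole appears across the complete bipartite join), whence $A_1\cup K$ is a clique cutset or $A_1$ is a proper clique module, together with the standard facts about chordal graphs and disconnected graphs. The case analysis you give is exactly the intended argument, just written out in more detail than the paper's one-paragraph justification.
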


We now give a complete structural description of (cap, 4-hole)-free
graphs that do not have a clique cutset, by considering the proof of
Theorem \ref{cap-decomp} from \cite{cckv-cap} particularized to
4-hole-free graphs.

An {\em expanded hole} consists of nonempty disjoint sets of vertices $S_1, \ldots ,S_k$,
$k \geq 4$, not all singletons, such that for all $1 \leq i\leq k$,
the graphs $G[S_i]$ are connected. Furthermore, for $i\neq j$, $S_i$ is complete
to $S_j$ if  $j=i+1$ or $j=i-1$ (modulo $k$), and anticomplete otherwise.
We also say that $G[S]$ is an expanded hole.
Note that if an expanded hole is  4-hole-free, then $S_i$ is a clique
for every  $i=1, \ldots, k$.
The following statement can be extracted from the proofs of Lemma 5.1 and Theorem 7.1 in
\cite{cckv-cap}.

\begin{lemma}\label{cfehf-decomp2}
Let $G$ be a (cap, 4-hole)-free graph.
Suppose that $S=\cup_{i=1}^{k} S_i$ is  an inclusion-wise maximal expanded hole of $G$
such that $|S_2|\geq 2$. Let $U$ be the set of vertices of $G$ that are complete to $S$.
Then $G$ has an amalgam $(V_1,V_2,A_1,A_2,K)$ where $S_2=A_2$ and $K\subseteq U$.
In particular,
either $K \cup S_2$ is a clique cutset
or $S_2$ is a proper clique module.
\end{lemma}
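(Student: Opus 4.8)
The idea is to show that the set $S_2$ — which has at least two vertices and whose neighborhood outside $S$ is controlled by the structure of the expanded hole — together with the vertices $U$ complete to $S$, forms the clique side $A_2 \cup K$ of an amalgam. Let me set up the pieces. Since $G$ is $4$-hole-free and $S$ is an expanded hole, each $S_i$ is a clique; in particular $S_2$ is a clique. Let $V_2'$ be $S_2$ together with every vertex of $V(G)\setminus (S\cup U)$ that has a neighbor in $S_2$ but is anticomplete to $S_1\cup S_3$, and more generally build $V_2$ out of the components of $G\setminus(S_2\cup U)$ that attach to $S_2$ only, with $V_1$ being the rest. The heart of the matter is to verify that the only edges between $V_1$ and $V_2$ go between $S_2$ and a set $A_1\subseteq S_1\cup S_3\cup(\text{attachments})$, all complete to $S_2=A_2$, and that $U$ (or a clique subset $K\subseteq U$) is complete to $A_1\cup A_2$.

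The first concrete step is to understand how a vertex $v\in V(G)\setminus(S\cup U)$ can attach to $S$. Using that $G$ is (cap, $4$-hole)-free and $S$ is an expanded hole with $|S_2|\ge 2$, one argues (this is exactly the kind of local analysis carried out in Lemma 5.1 of \cite{cckv-cap}) that $v$'s neighborhood in $S$ is quite restricted: roughly, $v$ is either complete to some $S_i\cup S_{i+1}$ (and then by maximality of $S$ we get a contradiction, or $v$ can be absorbed), or $v$ sees vertices of at most two consecutive $S_i$'s in a controlled way, or $v$ is anticomplete to most of $S$. The key consequence I want to extract is: \emph{no vertex outside $U$ is both adjacent to a vertex of $S_2$ and adjacent to a vertex of $S_1\cup S_3$ unless it is complete to $S_2$} — and if it is complete to $S_2$ but not in $U$, it still lies on the $A_1$ side. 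This dichotomy is what lets the partition $V_1,V_2$ be well-defined: a component of $G\setminus(S_2\cup U)$ either attaches to $S_2$ (put it, and its attachment set, in $V_2$) or does not (put it in $V_1$), and the attachment vertices are forced to be complete to all of $S_2$.

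Next I would check the amalgam axioms one by one. Set $A_2=S_2$, let $A_1$ be the set of vertices in $V_1$ with a neighbor in $V_2$, and let $K$ be a maximal clique contained in $U$ that is complete to $A_1\cup A_2$ (such a $K$ exists since $U$ is complete to $S\supseteq S_2=A_2$; one needs a small argument, using $4$-hole-freeness, that $U\cup S_2$ behaves well enough that a suitable clique $K\subseteq U$ complete to $A_1$ can be found — possibly $K=\emptyset$ works). Then: $V(G)=V_1\cup V_2\cup K$ by construction; $|V_2|\ge |S_2|\ge 2$; $|V_1|\ge 2$ because $S_1,S_3$ are nonempty and lie in $V_1$ (they are anticomplete to $S_2$ for $S_1$ vs $S_3$... here one must be careful — $S_1$ and $S_3$ are each anticomplete to $S_2$? no: $S_1,S_3$ are \emph{complete} to $S_2$, so they belong to $A_1\subseteq V_1$, giving $|V_1|\ge 2$ as long as $|S_1\cup S_3|\ge 2$, which holds since $k\ge 4$ so both are nonempty). $A_1$ complete to $A_2$ holds because every attachment vertex was shown to be complete to $S_2$. $K$ is a clique by choice, and complete to $A_1\cup A_2$ by choice. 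The only remaining axiom, $A_1$ complete to $A_2$ being \emph{the only} $V_1$–$V_2$ edges, is immediate from the definition of $A_1$. Finally, the ``in particular'' clause follows from the paragraph already in the excerpt preceding the lemma: in a $4$-hole-free graph with an amalgam, one may assume $A_2=S_2$ is a clique, and then $K\cup S_2$ is a clique cutset or $S_2$ is a proper clique module.

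The main obstacle is the first step: extracting the precise attachment dichotomy for vertices outside $S\cup U$ from the (cap, $4$-hole)-free hypothesis, i.e.\ genuinely redoing the relevant portion of the Conforti–Cornuéjols–Kapoor–Vuškovi\'c analysis in the expanded-hole setting and invoking the inclusion-wise maximality of $S$ to rule out the ``absorbable'' cases. Everything after that — assembling $V_1,V_2,A_1,A_2,K$ and checking the five bullet-point axioms — is bookkeeping. A secondary subtlety is the correct choice of $K\subseteq U$: one wants $K$ complete to $A_1$, and that may fail for all of $U$ at once, so $K$ should be taken as a cleverly chosen clique subset (or empty), with $4$-hole-freeness ensuring the leftover vertices of $U\setminus K$ can be safely placed in $V_1$ or $V_2$ without creating spurious cross edges.
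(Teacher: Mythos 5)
You should first know that the paper does not prove this lemma at all: it is stated as something that ``can be extracted from the proofs of Lemma 5.1 and Theorem 7.1 in \cite{cckv-cap}'', with no argument given. Your plan defers exactly the same central step --- the classification of how a vertex outside $S\cup U$ can attach to the expanded hole --- to exactly the same source, so at the level of what is actually justified, your proposal and the paper are on equal footing, and the amalgam bookkeeping you lay out (take $A_2=S_2$, let $A_1$ be the attachment set, check the five axioms, and derive the ``in particular'' clause from the fact that $S_2$ is a clique because $G$ is 4-hole-free) is consistent with what that extraction is meant to produce. Your handling of the final clause is correct: if $V_2\supsetneq A_2$ then $K\cup S_2$ is a clique cutset separating $V_2\setminus A_2$ from $V_1$, and if $V_2=A_2$ then $S_2$ is a proper clique module since $|V_1|\ge 2$ (using $S_1,S_3\subseteq A_1$, which you correctly settle after your mid-sentence wobble).

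That said, judged as a proof rather than a proof plan, there is a genuine gap, and it is the one you name yourself: the attachment dichotomy is the entire mathematical content of the lemma, and you assert it rather than prove it. Concretely, one must show that every vertex $v\notin S\cup U$ with a neighbour in $S_2$ is complete to $S_2$, and that the inclusion-wise maximality of $S$ (not just cap- and 4-hole-freeness) is what kills the remaining cases --- e.g.\ a vertex complete to $S_1\cup S_2$ but not to $S_3\cup\dots\cup S_k$ does not immediately yield a cap or a 4-hole; it is absorbed by enlarging some $S_i$, contradicting maximality. Until that case analysis is written out, nothing is established. A secondary point you leave hanging --- where the vertices of $U\setminus K$ go --- has a clean resolution you should record: $U$ is itself a clique (two nonadjacent $u,u'\in U$ together with one vertex of $S_1$ and one of $S_3$, which are anticomplete since $k\ge 4$, would induce a 4-hole), and every vertex of $U$ is complete to $S_2=A_2$, so any $u\in U$ not placed in $K$ can be put in $A_1$ provided it has no neighbour in $V_2\setminus S_2$; the vertices of $U$ that do have such neighbours are precisely the ones that must constitute $K$. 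Making that split explicit is part of what a complete proof owes the reader.
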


Let $M$ be a proper clique module of a graph $G$. The block of decomposition
of $G$ with respect to this module is the graph $G'=G\setminus (M\setminus \{u\})$,
where $u$ is any vertex of $M$.
For a graph $G$ and a vertex $u$ of $G$, we denote $N_G(u)$ (or $N(u)$ when clear from context)
by the set of neighbors of $u$ in $G$. Also $N[u]=N(u)\cup \{ u\}$.
The degree $d_G(u)$ of $u$  is $|N_G(u)|$.

\begin{lemma}\label{cutsets}
Let $M$ be a proper clique module of a graph $G$, and let $G'$ be the block of
decomposition with respect to this module.
If $G$ does not have a clique cutset, then $G'$ does not have a
clique cutset.
\end{lemma}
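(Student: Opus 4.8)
We need to show that if $G$ has no clique cutset, then neither does $G'$, where $G' = G \setminus (M \setminus \{u\})$ for a fixed vertex $u \in M$ and a proper clique module $M$. The natural strategy is the contrapositive: assume $G'$ has a clique cutset $C$, and produce a clique cutset of $G$. The key structural fact is that $M$ is a clique and that, in $G$, every vertex outside $M$ is either complete or anticomplete to $M$; moreover $M$ induces a clique, so $N_G[u] \supseteq M$ and all vertices of $M$ have the same neighbourhood outside $M$.

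**Main steps.**

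First I would let $C$ be a clique cutset of $G'$, so $G' \setminus C$ is disconnected; pick two vertices $a, b$ in distinct components of $G' \setminus C$. The plan is to define a candidate clique cutset $C^*$ of $G$ by ``re-expanding'' $C$ along the module: if $u \notin C$, set $C^* = C$; if $u \in C$, set $C^* = (C \setminus \{u\}) \cup M$. In the first case $C^*$ is already a clique in $G$ (it is a clique in $G'$, an induced subgraph of $G$). In the second case, $C^* = (C \setminus \{u\}) \cup M$ is a clique in $G$ because $C \setminus \{u\}$ is a clique, $M$ is a clique, and every vertex of $C \setminus \{u\}$ is adjacent to $u$ hence (being outside $M$) complete to $M$.

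Next I would verify that $C^*$ separates $G$. The components of $G' \setminus C$ give a partition of $V(G') \setminus C$; I would check that in $G \setminus C^*$ the vertices $a$ and $b$ still lie in distinct components. Any $a$–$b$ path in $G \setminus C^*$ that used a vertex of $M \setminus \{u\}$ could be rerouted through $u$ (since all vertices of $M$ have identical neighbourhoods outside $M$, and $M \setminus \{u\}$ is anticomplete to nothing relevant — more precisely, replacing a maximal subpath inside $M$ by the single vertex $u$ yields a walk in $G'$), giving an $a$–$b$ path in $G' \setminus C$, a contradiction. One must handle the edge case $a = u$ or $b = u$ and the case where $a$ or $b$ lies in $M$, but if $u \in C$ then $M \subseteq C^*$ so no vertex of $M$ survives in $G \setminus C^*$, and if $u \notin C$ then $M \setminus \{u\}$ forms its own piece attached only to $u$'s neighbourhood. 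In all cases $C^*$ is a clique cutset of $G$, contradicting the hypothesis.

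**Expected obstacle.**

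The routine part is checking $C^*$ is a clique; the delicate part is the separation argument, specifically the path-rerouting: one must argue carefully that a connecting path in $G \setminus C^*$ can be pushed down to $G' \setminus C$ without accidentally passing through $C$. The subtlety is that a vertex $w \in C \setminus \{u\}$ adjacent to some vertex of $M \setminus \{u\}$ is also adjacent to $u$, so contracting the $M$-portion of the path to $u$ keeps the path inside $G' \setminus C$ only because $w$ was already in $C$ and thus not on the path. Ensuring no boundary vertex of the rerouted path lands in $C$ — and correctly treating the two cases $u \in C$ and $u \notin C$ separately — is where the proof must be written with care; everything else is bookkeeping.
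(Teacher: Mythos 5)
Your proposal is correct and follows essentially the same route as the paper: in both arguments one takes a clique cutset $C$ of $G'$ and splits into the cases $u\notin C$ (where $C$ itself separates $G$, since every vertex of $M\setminus\{u\}$ attaches only to $u$ and to $N_{G'}(u)$, all lying in $u$'s component) and $u\in C$ (where $C\cup M$ is the desired clique cutset of $G$, and $G\setminus(C\cup M)=G'\setminus C$). The paper states these two cases in two sentences; your path-rerouting verification is just a more explicit write-up of the same observation.
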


\begin{proof}
Suppose $K$ is a clique cutset of $G'$. Let $u$ be the vertex of $M$ that
is in $G'$. If $u \not\in K$ then $N_{G'}(u)\setminus K$ is in the same
connected component of $G'\setminus K$ as $u$, and hence $K$ is a
clique cutset of $G$. If $u \in K$ then $M \cup K$ is a clique cutset
of $G$.
\end{proof}

\begin{theorem}\label{t1}
Let $G$ be a (cap, 4-hole)-free graph that contains a hole.
Let $F$ be a maximal vertex subset of $V(G)$ that induces a 2-connected
triangle-free graph,
$U$  the vertices of $V(G)\setminus F$ that are complete to $F$,
$D$ the vertices of $V(G)\setminus F$ that have at least
two neighbors in $F$ but are not complete to $F$, and
$S=V(G)\setminus (F \cup U \cup D)$.
Then the following hold:
\begin{itemize}
\item[(i)] $U$ is a clique.
\item[(ii)] $U$ is complete to $D \cup F$.
\item[(iii)] If $G$ does not have a clique cutset, then
for every $d \in D$, there is a vertex $u\in F$ and $D'\subseteq D$
that contains $d$ such that $D' \cup \{ u \}$ is a clique module of $G$.
In particular, for every $d'\in D'$, $N[d']=N[u]$.
\item[(iv)] If $G$ does not have a clique cutset, then $S=\emptyset$.
\item[(v)] If $G$ does not have a clique cutset, $F$ does not have a clique cutset.
\end{itemize}
\end{theorem}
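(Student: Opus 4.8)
The plan is to prove (i)--(v) in order, arguing throughout against the two forbidden induced subgraphs (the 4-hole and the cap) and invoking Lemmas \ref{cfehf-decomp2} and \ref{cutsets} for the later items. For (i): $G[F]$ is 2-connected and triangle-free, hence contains a hole, so it has two non-adjacent vertices $a,c$; if $u_1,u_2\in U$ were non-adjacent, then (each being complete to $F$) $\{u_1,a,u_2,c\}$ would induce a 4-hole. For (ii), $U$ is complete to $F$ by definition, so suppose $u\in U$ and $d\in D$ are non-adjacent. If $d$ has two non-adjacent neighbours $a,c$ in $F$, then $\{u,a,d,c\}$ is a 4-hole. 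Otherwise, since $G[F]$ is triangle-free, $d$ has exactly two neighbours $a,b$ in $F$ with $ab\in E(G)$; since $G[F]$ is 2-connected, $ab$ lies on a hole $H$ of $G[F]$, which has length at least $5$ because $G[F]$ is (triangle, 4-hole)-free, and then $d$ has exactly two neighbours on $H$, an adjacent pair, so $H$ together with $d$ is a cap — a contradiction.

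The core is (iii); here I would use Lemma \ref{cfehf-decomp2}. Fix $d\in D$. Since $d$ has at least two neighbours in $F$, $G[F\cup\{d\}]$ is 2-connected, so by maximality of $F$ it contains a triangle, hence $d$ has two adjacent neighbours in $F$; as above, that edge lies on a hole $H$ of $G[F]$ with $|H|\ge 5$. The key step is to analyse the attachment of $d$ to $H$: by 4-hole-freeness, no two neighbours of $d$ on $H$ are at distance exactly $2$ along $H$; by cap-freeness, $d$ has at least three neighbours on $H$; and pushing this (again via 4-holes in slightly larger subconfigurations) I would conclude that the neighbours of $d$ on $H$ are exactly three consecutive vertices $a,u,b$ of $H$ and, more strongly, that $N_F(d)=N_F(u)\cup\{u\}$. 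Replacing $u$ by $\{u,d\}$ on $H$ then gives an expanded hole in which $d$ occupies a part of size $2$; extending this to an inclusion-wise maximal expanded hole in which $d$ still shares its part with an $F$-vertex and applying Lemma \ref{cfehf-decomp2}, the no-clique-cutset hypothesis forces that part $M$ to be a proper clique module. A clique module is a set of vertices with a common closed neighbourhood, so with $u$ any vertex of $M\cap F$ and $D'=M\setminus\{u\}$ we get $N[d']=N[u]$ for all $d'\in D'$; each such $d'$ has at least two neighbours in $F$, is not complete to $F$ (neither is $u$, as $G[F]$ is triangle-free), and is not in $F$ (else $F$ would contain a triangle), so $D'\subseteq D$, as required.

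For (iv) I would first reduce to the case $D=\emptyset$: by (iii) each vertex of $D$ lies in a proper clique module $D'\cup\{u\}$ with $u\in F$, so repeatedly passing to the block of decomposition $G\setminus D'$ deletes all of $D$ while, by Lemma \ref{cutsets}, preserving the absence of clique cutsets and leaving $F$, $U$, $S$ and the presence of a hole unchanged. So assume $D=\emptyset$ and, for contradiction, $S\ne\emptyset$; let $C$ be a connected component of $G\setminus F$ meeting $S$. Rerunning the cap argument from (ii) against a hole of $G[F]$ shows that $C$ contains no vertex of $U$, so $C\subseteq S$ and each vertex of $C$ has at most one neighbour in $F$. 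The delicate point is to prove that $N_F(C)$ is a clique: a shortest path inside $C$ from a neighbour in $F$ of one vertex of $C$ to a neighbour in $F$ of another, completed by a chordless path in $G[F]$, would close into a hole or a cap unless those two $F$-vertices coincide or are adjacent. Granting this, $N_F(C)\subsetneq F$ since $G[F]$ is not a clique, so $N_F(C)$ is a clique cutset of $G$ separating $C$ from $F\setminus N_F(C)\ne\emptyset$ — a contradiction. Hence $S=\emptyset$.

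For (v), suppose $G[F]$ has a clique cutset $K'$; then $|K'|\le 2$, as $G[F]$ is triangle-free. Let $D_{K'}$ be the set of $d\in D$ whose associated vertex $u_d\in F$ (from (iii)) lies in $K'$, and set $K''=K'\cup U\cup D_{K'}$; by (i), (ii) and the common-closed-neighbourhood property of the modules in (iii), $K''$ is a clique. By (iv), $V(G)=F\cup U\cup D$, and for $d\in D\setminus D_{K'}$ the vertex $u_d$ lies in one component of $G[F]\setminus K'$, so (since $N[d]=N[u_d]$) all neighbours of $d$ outside $K''$ lie on that same side. Hence, for a component $X$ of $G[F]\setminus K'$, the set $X\cup\{d\in D\setminus D_{K'}:u_d\in X\}$ and its complement in $V(G)\setminus K''$ are non-empty and mutually anticomplete in $G\setminus K''$, so $K''$ is a clique cutset of $G$ — a contradiction; thus $F$ has no clique cutset. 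The main obstacle is the attachment analysis in (iii): deducing from cap- and 4-hole-freeness alone that $d$ clones a vertex of $H$ and that this is witnessed by a maximal expanded hole keeping $d$ in a nontrivial part, so that Lemma \ref{cfehf-decomp2} applies to $d$ itself; the analogous difficulty in (iv) is showing that $N_F(C)$ is a clique.
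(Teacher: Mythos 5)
Your overall architecture matches the paper's: reduce (iii) to Lemma \ref{cfehf-decomp2} by exhibiting $d$ inside an expanded hole, then handle (iv) and (v) by clique-module decomposition and Lemma \ref{cutsets}. Items (i), (ii) and (v) are fine, and (iv) is fine in spirit (though your claim that ``rerunning the cap argument'' shows a component $C$ meeting $S$ contains no vertex of $U$ is not justified -- a vertex of $S$ adjacent to a universal-on-$F$ vertex $u$ creates neither a cap nor a $4$-hole; the clean fix, which is what the paper does, is to take $C$ to be a component of $G[S]$ and use $N_F(C)\cup U$ as the separating clique).

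The genuine gap is exactly where you flag it, in (iii), and it is not just a technicality: your claimed conclusion that the neighbours of $d$ on $H$ are ``exactly three consecutive vertices'' does not follow from cap- and $4$-hole-freeness. A vertex complete to a hole of $G[F]$ creates neither a cap nor a $4$-hole -- the $5$-wheel $W_5$ (and indeed any wheel $W_k$ with hub complete to the rim) is (cap, $4$-hole)-free, and Theorem \ref{t1} does not assume odd-signability, so even wheels are not excluded here. The local analysis does show that $N_H(d)$ is either $\{u\}\cup N_H(u)$ for some $u\in V(H)$ (the expanded-hole case) or all of $V(H)$, but the second case is live and your proposal has no mechanism to dispose of it. The paper's proof of its Claim handles it with a separate maximality argument: if $d$ is complete to some hole, one grows a maximal $2$-connected $F'\subseteq F$ to which $d$ is complete; since $d\in D$ is not complete to $F$, $F'\ne F$, and $2$-connectivity of $G[F]$ yields a hole $H'$ through an edge of $F'$ and a vertex of $F\setminus F'$, to which $d$ must again be complete, contradicting the choice of $F'$. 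Without this (or an equivalent use of the hypothesis that $d$ is not complete to $F$), the passage from ``$d$ has two adjacent neighbours on a hole'' to ``$G[V(H)\cup\{d\}]$ is an expanded hole'' -- and hence the application of Lemma \ref{cfehf-decomp2} -- does not go through.
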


\begin{proof}
Since $G$ is 4-hole-free and $F$ contains nonadjacent vertices, clearly
$U$ must be a clique, and hence (i) holds.

\vspace{2ex}

\noindent
{\bf Claim:} {\em For every $d \in D$, $G[F]$ contains a hole $H$
such that $G[V(H) \cup \{ d \}]$ is an expanded hole of $G$.}
\\
\noindent {\em Proof of the Claim:}
Let $d$ be any vertex of $D$, and assume that there is no hole H contained in $G[F]$ such that $G[V(H) \cup \{ d \}]$
is an expanded hole.
Note that $G[F \cup \{ d\}]$ contains a triangle $d,x,y$, for otherwise the maximality of
$F$ is contradicted.
Since $G[F]$ is 2-connected and triangle-free,
$x$ and $y$ are contained in a hole $H$ of $G[F]$.
It is easy to see that since $G$ is cap-free
and we are assuming that $G[V(H)\cup \{ d \}]$ is not
an expanded hole, it follows that $d$ is complete to $V(H)$.
Let $F'$ be a maximal subset of $F$ such that $G[F']$ contains $H$, is
2-connected and $d$ is complete to $F'$. Since $F\neq F'$ and both
$G[F]$ and $G[F']$ are 2-connected, some $z\in F \setminus F'$ belongs to
a hole $H'$ that contains an edge of $G[F']$. As before, it follows that
$d$ is universal for $H'$, and hence $F' \cup V(H')$ contradicts the
choice of $F'$.
This completes the proof of the Claim.

\vspace{2ex}

By the Claim, every vertex $d$ of $D$ has two nonadjacent neighbors in $F$,
and since $G$ is 4-hole-free, it follows that
every vertex of $U$ is adjacent to $d$.
Therefore, (ii) holds.

Now suppose that $G$ does not contain a clique cutset.
By the Claim  and Lemma \ref{cfehf-decomp2}, (iii) holds.
Let $D' \cup \{ u \}$ be a proper clique module from (iii). Then the block
of decomposition with respect to this module is the graph $G\setminus D'$.
So by performing a sequence of clique module decompositions, we get the
graph $G'=G \setminus D$. By Lemma \ref{cutsets}, $G'$ does not have a
clique cutset. Suppose that $S \neq \emptyset$.
Note that every vertex in $S$ has at most one neighbor in $F$.
Let $C$ be a connected component of $G[S]$. By the maximality of $F$,
there is at most one vertex in $F$, say $y$, that has a neighbor in $C$.
So $U \cup \{ y\}$ is a clique cutset of $G'$, a contradiction.
If no component of $G[S]$ is adjacent
to a vertex of $F$, then $U$ is a clique cutset of $G'$, a contradiction.
Therefore, (iv) holds.
Finally, suppose that $F$ has a clique cutset $K$. Then by (i) and (iv), $K \cup U$ is a clique
cutset of $G'$, a contradiction. Therefore, (v) holds.
\end{proof}

We say that the graph $G'$ is obtained from a graph $G$ by
{\em blowing up vertices of $G$ into cliques} if $G'$ consists of the disjoint
union of cliques $K_u$, for every $u \in V(G)$, and all edges between
cliques $K_u$ and $K_v$ if and only if $uv \in E(G)$. This is also referred to
as substituting clique $K_u$ for vertex $u$ (for all $u$).
The graph $G'$ is obtained from a graph $G$ by {\em adding a universal clique}
if $G'$ consists of $G$ together with (a possibly empty) clique $K$,
and all edges between vertices of $K$ and vertices of $G$.
Note that both of these operations preserve being (cap, 4-hole)-free,
i.e., $G$ is  (cap, 4-hole)-free if and only if $G'$ is (cap, 4-hole)-free.

\begin{theorem}\label{t2}
Let $G$ be a (cap, 4-hole)-free graph that contains a hole and has no clique
cutset.
Let $F$ be any maximal induced subgraph of $G$
with at least 3 vertices that is
triangle-free and has no clique cutset. Then $G$ is obtained from $F$ by first blowing up
vertices of $F$ into cliques, and then adding a
universal clique.
Furthermore, any graph obtained by this sequence of operations starting from
a (triangle, 4-hole)-free graph with at least 3 vertices and no clique cutset is (cap, 4-hole)-free
and has no clique cutset.
\end{theorem}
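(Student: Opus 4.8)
\emph{Structural direction.} The plan is to read this half off Theorem~\ref{t1}. Let $G$ be (cap, $4$-hole)-free, contain a hole, and have no clique cutset, and let $F$ be a maximal induced subgraph of $G$ on at least three vertices that is triangle-free and has no clique cutset. First I would check that $F$ is one of the subsets to which Theorem~\ref{t1} applies: having at least three vertices and no clique cutset, $F$ is connected with no cut vertex, hence $2$-connected and triangle-free; extending $F$ to a maximal $2$-connected triangle-free induced subgraph $F'$ and applying Theorem~\ref{t1}(v) to $F'$ shows $F'$ has no clique cutset, so the maximality of $F$ forces $F'=F$, i.e.\ $F$ is already maximal among $2$-connected triangle-free induced subgraphs. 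Now apply Theorem~\ref{t1} with this $F$, keeping its notation $U,D,S$: by (iv) $S=\emptyset$, so $V(G)=F\cup U\cup D$; by (i) and (ii) $U$ is a clique complete to $F\cup D$, i.e.\ a universal clique; and by (iii) every $d\in D$ has a vertex $u(d)\in F$ with $N_G[d]=N_G[u(d)]$. The rest is bookkeeping: set $K_u:=\{u\}\cup\{d\in D:u(d)=u\}$, and using the equalities $N_G[d]=N_G[u(d)]$ check that $u(d)$ is uniquely determined by $d$ (if $N_G[u]=N_G[v]$ with $u\neq v$ in $F$, then $u\sim v$, and any $F$-neighbour of $u$ --- there are at least two, since $F$ is $2$-connected --- would be a common neighbour of $u$ and $v$, contradicting triangle-freeness), that the $K_u$ partition $F\cup D$ into cliques, and that $K_u$ is complete to $K_v$ exactly when $uv\in E(F)$. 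This presents $G$ as $F$ with each vertex blown up into $K_u$, followed by addition of the universal clique $U$.

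\emph{Converse direction.} Let $H$ be (triangle, $4$-hole)-free on at least three vertices with no clique cutset, let $H'$ be obtained from $H$ by blowing up each $u\in V(H)$ into a clique $K_u$, and let $G=H'$ together with an added universal clique $K$. Since $H$ is triangle-free it is cap-free (every cap contains a triangle), hence (cap, $4$-hole)-free, and since blowing up into cliques and adding a universal clique both preserve being (cap, $4$-hole)-free (noted just before the statement), $G$ is (cap, $4$-hole)-free. For the absence of a clique cutset, suppose $Q$ is one. Every vertex of $K$ is universal in $G$, so if one lay outside $Q$ then $G\setminus Q$ would be connected; hence $K\subseteq Q$ and $Q\setminus K$ is a clique cutset of $G\setminus K=H'$, so it suffices to show $H'$ has no clique cutset. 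Let $R$ be a clique of $H'$; its image under the projection $\pi\colon K_u\mapsto u$ is a clique of $H$, hence (by triangle-freeness) has at most two vertices, so $R$ lies in the union of at most two of the $K_u$. Put $R^*:=\{u\in V(H):K_u\subseteq R\}$, a clique of $H$. I would then observe that the components of $H'\setminus R$ correspond bijectively to those of $H\setminus R^*$: a path in $H'\setminus R$ projects to a walk in $H\setminus R^*$, and a path in $H\setminus R^*$ lifts to one in $H'\setminus R$ because $K_u\setminus R\neq\emptyset$ whenever $u\notin R^*$ and consecutive such sets are complete to each other. Hence if $R$ disconnects $H'$ then $R^*$ disconnects $H$ (or $R^*=\emptyset$ and $H$ is itself disconnected), contradicting that $H$ has no clique cutset.

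\emph{Expected main obstacle.} The structural direction is largely routine once Theorem~\ref{t1} is in hand, the one delicate point being that $u(d)$ is well defined --- which is exactly where triangle-freeness and $2$-connectedness of $F$ enter. The substantive step is the reduction in the converse direction: showing that $H'$ inherits ``no clique cutset'' from $H$ via the projection $\pi$, which works only because $H$ is triangle-free, so that any clique cutset of $H'$ is confined to at most two blown-up cliques and its fully covered part $R^*$ projects to a clique cutset of $H$.
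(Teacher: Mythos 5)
Your proposal is correct and follows essentially the same route as the paper: the structural direction is read off Theorem~\ref{t1} by extending $F$ to a maximal $2$-connected triangle-free induced subgraph $F'$, using part (v) to force $F'=F$, and the converse is the observation that blowing up into cliques and adding a universal clique preserve both properties. Your write-up merely fills in details the paper leaves as "easy" (the well-definedness of $u(d)$ and the component correspondence under the projection $\pi$), and these details check out.
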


\begin{proof}
Let $F'$ be a maximal 2-connected triangle-free induced subgraph of $G$ that contains $F$.
By Theorem \ref{t1} (v), $F'$ does not have a clique cutset, and hence $F'=F$.
So the first statement follows from Theorem \ref{t1}.
The second statement follows from an easy observation that blowing up vertices into cliques
and adding a universal clique preserves being (cap, 4-hole)-free and having no clique cutset.
\end{proof}

Triangle-free odd-signable graphs were studied in \cite{cckv-tf} where the following
construction was obtained.
A chordless $xz$-path $P$ is an {\em ear} of a hole $H$ contained in a graph $G$
if $V(P)\setminus \{ x,z\} \subseteq V(G)\setminus V(H)$,
vertices $x,z\in V(H)$ have a common neighbor $y$ in $H$,
and $(V(H)\setminus \{ y\}) \cup V(P)$ induces a hole $H'$ in $G$.
The vertices $x$ and $z$ are the {\em attachments} of $P$ in $H$, and $H'$ is said to be
obtained by {\em augmenting} $H$ with $P$.
A graph $G$ is said to be obtained from a graph $G'$ by an {\em ear addition}
if the vertices of $G \setminus G'$ are the intermediate vertices of an ear
of some hole $H$ in $G'$, say an ear $P$ with attachments $x$ and $z$ in $H$,
and the graph $G$ contains no edge connecting a vertex of
$V(P)\setminus \{ x,z\}$ to a vertex of
$V(G')\setminus \{ x,y,z\}$, where $y\in V(H)$ is adjacent to $x$ and $z$.
An ear addition is {\em good} if
\begin{itemize}
\item $y$ has an odd number of neighbors in $P$,
\item $G'$ contains no wheel $(H_1,v)$, where $x,y,z\in V(H_1)$ and $v$ is
adjacent to $y$, and
\item $G'$ contains no wheel $(H_2,y)$, where $x,z$ are neighbors of
$y$ in $H_2$.
\end{itemize}

The complete bipartite graph $K_{4,4}$ with a perfect matching removed
is called a {\em cube}. Note that the cube contains 4-holes.

\begin{theorem} {\em\bf (Theorem 6.4 in \cite{cckv-tf})} \label{tfree}
Let $G$ be a  triangle-free graph with at least three vertices
that is not a cube and has no clique cutset. Then, $G$ is odd-signable
if and only if it can be obtained, starting from a hole, by a
sequence of good ear additions.
\end{theorem}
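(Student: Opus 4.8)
The plan is to prove the two directions separately, using the excluded-subgraph characterization of odd-signable graphs (Theorem~\ref{thm:forbid odd-signable}) throughout, and to treat the harder (\emph{only if}) direction by induction on $|V(G)|$.

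For the \emph{if} direction, a hole has at least three vertices, is triangle-free, and is trivially odd-signable, so it suffices to show that a good ear addition applied to a triangle-free odd-signable graph $G'$ produces an odd-signable graph $G$ (triangle-freeness of $G$ being one of the hypotheses of the theorem). I would fix an odd signing of $G'$ and extend it. If $P$ is the ear, with attachments $x,z$ and apex $y\in V(H)$, then the augmented hole $H'$ shares the path $H\setminus\{y\}$ with $H$, so to make $H'$ odd one must put on $P$ a total weight of the same parity as that of the two edges $xy$ and $yz$; the requirement that $y$ have an odd number of neighbours in $P$ is exactly what makes this one choice simultaneously work on every hole formed by $y$ together with a subpath of $P$, while the two excluded-wheel conditions take care of the remaining holes through $V(P)$. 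Equivalently, one can argue straight from Theorem~\ref{thm:forbid odd-signable} that a good ear addition creates no even wheel, theta, or prism.

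For the \emph{only if} direction I would induct on $|V(G)|$, with base case $G$ a hole. Suppose $G$ is not a hole; since $G$ has no clique cutset it is $2$-connected, and since it is triangle-free and contains a cycle it contains a hole. The heart of the argument is to find a ``last ear'': a hole $H$ with a vertex $y\in V(H)$ and its two $H$-neighbours $x,z$, and a chordless $xz$-path $P$ whose internal vertices are exactly the vertices we wish to peel off, such that (a) deleting the internal vertices of $P$ from $G$ is an ear addition in reverse, i.e.\ $(V(H)\setminus\{y\})\cup V(P)$ induces a hole and no internal vertex of $P$ has a neighbour outside $V(P)\cup\{x,y,z\}$; (b) this ear addition is good; and (c) the graph $G'$ obtained by deleting the internal vertices of $P$ is triangle-free, odd-signable, and either is a hole or has no clique cutset and is not a cube. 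Given such an ear, $G'$ is an induced subgraph of $G$, hence odd-signable; by the induction hypothesis $G'$ is built from a hole by good ear additions, and adding $P$ completes the construction of $G$.

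The main obstacle, which I expect to occupy most of the proof, is producing this ear. I would pick, among all holes $H$ of $G$ and all chordless paths with both endpoints on $H$ and interior disjoint from $H$, one that is extremal in an appropriate sense — for instance with the two attachments as close as possible along $H$, and then with interior as small as possible — and show that odd-signability forces the two attachments to share a common neighbour $y$ on $H$ (so that $P$ is genuinely an ear) and forces $y$ to have an odd number of neighbours in $P$; violating either would produce an even wheel centred at $y$, a theta, or a prism in $G$, contrary to Theorem~\ref{thm:forbid odd-signable}. The same characterization yields (a) — a chord of $P$ or an edge from the interior of $P$ to the rest of $G$ would allow a shorter or smaller choice — and (b): a wheel $(H_1,v)$ of $G'$ with $x,y,z\in V(H_1)$ and $v$ adjacent to $y$, or a wheel $(H_2,y)$ of $G'$ with $x,z$ neighbours of $y$ on $H_2$, would combine with $P$ to give an even wheel or a $3PC$ in $G$. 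For (c), $G'$ is odd-signable as an induced subgraph of $G$; it stays clique-cutset-free because in a triangle-free graph a clique cutset is a single vertex or a single edge, and such a cutset of $G'$, together with the $2$-connectivity of $G$ and the fact that the interior of $P$ attaches to $G'$ only within $\{x,y,z\}$, would give a clique cutset of $G$; and one shows $G'$ is not a cube, checking the residual case $G'$ a cube directly. The genuinely delicate steps are the extremal choice that upgrades a general $H$-path to a good ear, and the bookkeeping that keeps $G'$ within the class so that induction applies.
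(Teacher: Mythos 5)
This statement is not proved in the paper at all: it is quoted verbatim as Theorem~6.4 of \cite{cckv-tf} and used as a black box, so there is no in-paper proof to compare your attempt against. Judged on its own, your proposal is a reasonable high-level plan in the same spirit as the original ear-decomposition argument, but as written it has genuine gaps in both directions, and the gaps are exactly where the mathematical content of the theorem lies.

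For the \emph{only if} direction, the entire difficulty is the existence of a removable last ear, and your extremal choice (attachments as close as possible on $H$, then smallest interior) is not shown to produce one. A chordless $xz$-path that is minimal in your sense still need not have a pendant interior: its internal vertices can have many neighbours outside $V(P)\cup\{x,y,z\}$, and nothing in your sketch forces condition (a). Ruling this out is not a routine application of Theorem~\ref{thm:forbid odd-signable}; in \cite{cckv-tf} it requires a substantial structural analysis of how paths attach to holes in odd-signable graphs, which is why the result occupies most of that paper. You also need the residual graph $G'$ to satisfy all the induction hypotheses: ``no clique cutset'' in a triangle-free graph means no cut vertex \emph{and} no cut edge, and your one-line argument that deleting the ear interior cannot create an edge cutset is not carried out; likewise the case $G'=$ cube is only flagged, not handled, even though the cube is precisely the exceptional graph the theorem must exclude. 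For the \emph{if} direction, the claim that a good ear addition preserves odd-signability is asserted rather than proved: after fixing the parity of $w(P)$ you must check every new chordless cycle, including those through $y$ and a subpath of $P$ between consecutive neighbours of $y$, and those using all of $P$ together with an $xz$-path of $G'$; showing that the three conditions in the definition of a good ear make all of these odd (equivalently, create no theta or even wheel) is a real case analysis, not a remark. In short, the proposal is a correct outline of the strategy but defers every step that actually requires proof.
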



\section{Bound on the chromatic number} \label{bound}

We observe that the class of (triangle, 4-hole)-free graphs has
unbounded chromatic number.
In \cite{mgr} it is shown that (triangle, even hole)-free graphs
have a vertex of degree at most 2.
We now show that (triangle, 4-hole)-free odd-signable graphs
that contain at least one edge, have an edge both of whose ends are of degree
at most 2.
This will imply that
every (cap, 4-hole)-free odd-signable graph (and in particular every (cap, even hole)-free graph) $G$ has a vertex of
degree at most $\frac{3}{2}\omega (G)-1$, and hence that every graph in this
class has a proper coloring that uses at most $\frac{3}{2}\omega (G)$ colors.

\begin{theorem}\label{t22}
Every (cap, 4-hole)-free odd-signable graph $G$ has a vertex of
degree at most $\frac{3}{2}\omega (G)-1$.
\end{theorem}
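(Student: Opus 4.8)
The plan is to induct on the structure of $G$ using Theorem~\ref{cfehf-decomp}, reducing to the case where $G$ has no clique cutset and no proper clique module; then split into the two basic outcomes — $G$ complete (trivial, the only vertex has degree $\omega(G)-1 \le \frac32\omega(G)-1$) and $G$ a (triangle, $4$-hole)-free odd-signable graph plus at most one universal vertex. For the inductive reductions I would argue as follows. If $G$ has a clique cutset $K$, pick an end block of the clique-cutset decomposition, i.e.\ a piece $G_1 = G[V_1 \cup K]$ with $V_1 \ne \emptyset$ and $V_2 \ne \emptyset$ on the other side; $G_1$ is again (cap, $4$-hole)-free odd-signable and has fewer vertices, so by induction it has a vertex $v$ with $d_{G_1}(v) \le \frac32\omega(G_1)-1 \le \frac32\omega(G)-1$. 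If moreover $v \in V_1 \setminus K$, then $N_G(v) = N_{G_1}(v)$ and we are done; the standard trick is to choose $G_1$ minimal so that $V_1 \setminus K \ne \emptyset$ contains such a low-degree vertex — more carefully, one applies induction and notes that a minimum-degree vertex of a clique-cutset block can always be taken outside the cutset (otherwise that block is itself a clique, handled directly). If $G$ has a proper clique module $M$ with block $G' = G \setminus (M \setminus \{u\})$, then $\omega(G') \le \omega(G)$ and $G'$ is smaller; apply induction to get $w$ with $d_{G'}(w) \le \frac32\omega(G')-1$, and lift: if $w \notin N[M]$ then $d_G(w) = d_{G'}(w)$; if $w = u$ or $w \in N(M)$ one must check the inflation by $|M|-1$ does not break the bound — here the point is that $N[M]$ is a clique (since $M$ is a clique module, $M \cup N(M)$... ) wait, that is false in general, so instead one takes the low-degree vertex inside $M$: any vertex of $M$ has degree $d_G(u)$, and $M$ itself is a clique contained in a maximum clique only if $|M| + |N(M) \cap (\text{clique})|$... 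I would instead simply observe that substituting a clique for a vertex increases $\omega$ by at least as much as it increases the relevant degrees, making the bound monotone; this is the routine verification.

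The heart of the matter is the no-clique-cutset, no-module case landing on $F$ plus a universal clique, where $F$ is (triangle, $4$-hole)-free odd-signable with no clique cutset. Here I would first prove the promised intermediate statement: \emph{a (triangle, $4$-hole)-free odd-signable graph containing at least one edge has an edge both of whose endpoints have degree at most $2$.} For this I would use Theorem~\ref{tfree}: either $F$ is a cube — but the cube contains a $4$-hole, so it is excluded — or $F$ is built from a hole by good ear additions (or $F$ has a clique cutset, excluded, or $|V(F)| < 3$). In the hole, every vertex has degree $2$, so any edge works. For the inductive step of an ear addition with ear $P$ and attachments $x,z$ sharing neighbor $y$: the internal vertices of $P$ form a path attached only at $x,z$, so all but the two end-internal vertices of $P$ have degree $2$, and if $P$ has at least two internal vertices the interior edge of $P$ joins two degree-$2$ vertices, done; the delicate case is when $P$ is a single edge or has one internal vertex, where one must use the \emph{good}ness conditions (the parity condition on $y$'s neighbors in $P$, and the two no-wheel conditions) to locate a surviving degree-$2$ edge elsewhere — I expect this to be the main obstacle, requiring a careful case analysis of how short ears interact with previously-created low-degree edges and ensuring $4$-hole-freeness forbids the bad configurations.

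Given that intermediate result, I finish as follows. Write $G = G_F \cup K$ where $G_F$ is obtained from $F$ by blowing up vertices into cliques and $K$ is the universal clique (Theorem~\ref{t2}). Take an edge $uv$ of $F$ with $d_F(u), d_F(v) \le 2$. In $G_F$, vertex $u$ is blown into a clique $K_u$; the neighborhood of a vertex $u' \in K_u$ is $(K_u \setminus \{u'\}) \cup \bigcup_{w \in N_F(u)} K_w$, so $d_{G_F}(u') = (|K_u|-1) + \sum_{w \in N_F(u)} |K_w|$. Now $\{u'\} \cup K_u' \cup K_w$ for $w \in N_F(u)$ — actually the relevant maximum clique through $u'$ is $K_u \cup K_w$ for the neighbor $w$ maximizing $|K_w|$, together with $K$; since $F$ is triangle-free the two neighbors of $u$ are non-adjacent, so $K_u, K_{w_1}, K_{w_2}$ pairwise: only consecutive ones are complete, giving $\omega(G) \ge |K| + |K_u| + \max(|K_{w_1}|,|K_{w_2}|)$, while $d_G(u') = |K| + (|K_u|-1) + |K_{w_1}| + |K_{w_2}|$. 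Writing $a = |K_u|, b = |K_{w_1}| \ge |K_{w_2}| = c, t = |K|$ (and $c \ge 0$, allowing $u$ to have degree $\le 2$ but using both neighbors for the worst case; if $d_F(u)\le 1$ the bound is easier), we need $t + a - 1 + b + c \le \frac32(t + a + b) - 1$, i.e.\ $2c \le t + a + b$, which holds since $c \le b \le a+b$ wait we need it against $t+a+b$ and $c \le b$ gives $2c \le b + b \le ?$ — the clean inequality is $2c \le 2b \le a + b \le t + a + b$ provided $a \ge b$, which may fail, so one instead symmetrizes: pick the endpoint (either $u$ or $v$ of the edge $uv$) and the labeling of its blown clique to make the comparison work, using that $uv$ is an edge so $K_u$ is complete to $K_v$ and $|K_u|+|K_v| \le \omega(G) - t$; this bookkeeping, while fiddly, is exactly the kind of routine optimization that makes the constant $\frac32$ come out, and I would present it as a short computation rather than belabor it.
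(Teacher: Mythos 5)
Your overall route---decompose by clique cutsets, use Theorem~\ref{t2} to write each atom as a blown-up (triangle, $4$-hole)-free skeleton $F$ plus a universal clique, find an edge of $F$ both of whose ends have degree $2$ via the ear decomposition of Theorem~\ref{tfree}, and finish with the $\tfrac32$ computation by choosing the endpoint whose blown-up clique is smaller---is exactly the paper's. The final computation does work once symmetrized as you suggest: since $|K_u|+|K_v|\le\omega(G)$ one may assume $|K_u|\le\tfrac12\omega(G)$, and then $d_G(v')=|K_u|+|K_v|-1+|K_{v_1}|\le\tfrac12\omega(G)+\omega(G)-1$, using that $v$ also has degree $2$ in $F$ so $K_v\cup K_{v_1}$ lies in one clique.

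Two steps you wave through are, however, genuinely broken as stated. First, the clique-cutset lift: it is false that ``a minimum-degree vertex of a clique-cutset block can always be taken outside the cutset (otherwise that block is itself a clique)''---take $K=\{k\}$ with $k$ attached to a single vertex of a large clique $V_1$; the unique minimum-degree vertex of the block is $k$, yet the block is not a clique. What you need is that every non-complete piece has a \emph{nice} vertex outside the cutset, and the clean way to get it is to strengthen the induction hypothesis to ``either $G$ is complete or $G$ has two \emph{nonadjacent} nice vertices'' (this is what the paper does): since the cutset is a clique, at most one of two nonadjacent nice vertices can lie in it. This strengthening then obliges you to exhibit two nonadjacent degree-$2$ edges in $F$, not one, which the paper extracts from the two end segments (between $y_1,y_2$ and between $y_{k-1},y_k$) of the last ear. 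Second, ``substituting a clique for a vertex increases $\omega$ by at least as much as it increases the relevant degrees'' is false: blowing a neighbour $v$ of $w$ up to a clique of size $s$ adds $s-1$ to $d(w)$ but need not raise $\omega$ by $\tfrac23(s-1)$, so the bound $d\le\tfrac32\omega-1$ is not monotone under this operation---if it were, the theorem would reduce to the triangle-free case with no computation at all. (That step is also redundant in your plan, since Theorem~\ref{t2} already packages the clique modules into the blow-up.) Finally, the ``main obstacle'' you anticipate---short ears---does not occur: in a (triangle, $4$-hole)-free graph consecutive neighbours of $y$ on an ear are at distance at least $3$ along it (distance $1$ gives a triangle through $y$, distance $2$ a $4$-hole through $y$), and by examining only the \emph{last} ear, whose interior vertices have all their $F$-neighbours already determined, one never has to track whether later ears destroy previously found degree-$2$ edges.
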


\begin{proof}
Given a graph $G$, let us say that a vertex $v$ of $G$ is {\em nice} if
its degree is at most $\frac{3}{2}\omega (G)-1$.
We prove that {\em if $G$ is a (cap, 4-hole)-free odd-signable graph, then
either $G$ is complete or it has at least two nonadjacent nice vertices}.
Assume that this does not hold and let $G$ be a minimum counterexample.

Suppose that $G$ has a clique cutset $K$. Let $C_1, \ldots ,C_k$ be
the connected components of $G \setminus K$, and for $i=1,\ldots ,k$,
let $G_i=G[C_i \cup K]$. By the choice of $G$, for every $i$,
$G_i$ is either a complete graph or it has at least two nonadjacent nice vertices.
So there is a vertex $v_i \in C_i$ that is nice in $G_i$, and hence
in $G$ as well. But then $G$ has at least two nonadjacent nice vertices,
a contradiction. Therefore, $G$ does not have a clique cutset.

This also implies that $G$ cannot be chordal, since every chordal graph is either complete
 or has a clique cutset. So, $G$ contains a hole.
Let $F$ be a maximal induced subgraph of $G$ that is
triangle-free and has no clique cutset.
By Theorem \ref{t2}, $G$ is obtained from $F$ by blowing up vertices of $F$
into cliques and adding a universal clique $U$. Note that if a vertex
$u$ is nice in $G \setminus U$, then it is nice in $G$, and
hence by the choice of $G$, $U=\emptyset$.
For $u\in V(F)$, let $K_u$ be the clique that the vertex $u$ is blown up into.

\vspace{2ex}

\noindent
{\bf Claim:} {\em If $u_1,u,v,v_1$ is a path of $F$ such that $u$ and
$v$ are both of degree 2 in $F$, then $u$ or $v$ is nice in $G$}.

\vspace{2ex}

\noindent
{\em Proof of the Claim:} Since $|K_u|+ |K_v|\leq \omega (G)$, we may assume that
$|K_u|\leq \frac{1}{2}\omega (G)$. But then
$d_G(v)=|K_{u}|+|K_v|-1+|K_{v_1}|\leq \frac{3}{2}\omega (G)-1$.
This completes the proof of the Claim.

\vspace{2ex}

By Theorem \ref{tfree} we consider the last ear $P$ in the construction of
$F$. Say that $P$ is an ear of hole $H$ and its attachments in $H$ are
$x$ and $z$. Let $y$ be the common neighbor of $x$ and $z$ in $H$.
Since $P$ is a good ear, $y$ has an odd number of neighbors in $P$.
Let $y_1, \ldots ,y_k$ be the neighbors of $y$ in $P$ in the order when
traversing $P$ from $x$ to $z$ (so $y_1=x$ and $y_k=z$). Since $F$ is
(triangle, 4-hole)-free, both $y_1y_2$-subpath of $P$ and $y_{k-1}y_k$-subpath of
 $P$ contain an edge both of whose ends are of degree 2. It then
 follows by the Claim that $G$ has at least two
nonadjacent nice vertices, a contradiction.
\end{proof}

\begin{corollary}\label{cor:binding}
If $G$ is a (cap, 4-hole)-free odd-signable graph, then
$\chi (G) \leq \frac{3}{2}\omega (G)$.
\end{corollary}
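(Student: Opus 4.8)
The plan is to derive the bound directly from Theorem \ref{t22} by a standard degeneracy/greedy colouring argument; the only points needing care are that the hypothesis passes to induced subgraphs and that $\frac{3}{2}\omega(G)$ need not be an integer.

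First I would note that the class of (cap, 4-hole)-free odd-signable graphs is hereditary: caps and 4-holes are preserved under taking induced subgraphs, and by Theorem \ref{thm:forbid odd-signable} odd-signability amounts to excluding even wheels, thetas and prisms, which are likewise closed under induced subgraphs. Hence every induced subgraph $H$ of $G$ is again (cap, 4-hole)-free odd-signable, and since $\omega(H)\le\omega(G)$, Theorem \ref{t22} supplies a vertex of $H$ of degree (in $H$) at most $\frac{3}{2}\omega(G)-1$. As degrees are integers, this says $G$ is $d$-degenerate with $d:=\lfloor\frac{3}{2}\omega(G)\rfloor-1$.

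Next I would produce an ordering $v_1,\dots,v_n$ of $V(G)$ witnessing this: repeatedly take a minimum-degree vertex of the current graph, delete it, and place it at the end of the current prefix; applying the previous paragraph at each deletion shows each $v_i$ has at most $d$ neighbours among $v_1,\dots,v_{i-1}$. Then colour $v_1,v_2,\dots,v_n$ in this order, giving $v_i$ the least colour from $\{1,\dots,d+1\}$ avoided by its already-coloured neighbours; this works since $v_i$ has at most $d$ such neighbours. The result is a proper colouring with at most $d+1=\lfloor\frac{3}{2}\omega(G)\rfloor\le\frac{3}{2}\omega(G)$ colours, so $\chi(G)\le\frac{3}{2}\omega(G)$.

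There is essentially no hard step: all the content sits in Theorem \ref{t22}, and what remains is the textbook fact that a $d$-degenerate graph is $(d+1)$-colourable together with the bookkeeping that heredity of the class lets us invoke Theorem \ref{t22} on every subgraph met while peeling off minimum-degree vertices. Equivalently, one could run an induction on $|V(G)|$: by Theorem \ref{t22} pick $v$ with $d_G(v)\le\frac{3}{2}\omega(G)-1$, colour $G\setminus v$ with at most $\frac{3}{2}\omega(G)$ colours by the inductive hypothesis (using $\omega(G\setminus v)\le\omega(G)$), and extend to $v$, which sees at most $\lfloor\frac{3}{2}\omega(G)\rfloor-1$ colours on its neighbours.
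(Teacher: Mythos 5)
Your proposal is correct and is exactly the argument the paper intends: the paper's proof of Corollary \ref{cor:binding} simply cites Theorem \ref{t22}, and the discussion immediately following the corollary describes the same greedy colouring along a minimum-degree ordering that you spell out. Your added care about heredity of the class and the integrality of degrees is sound and merely makes explicit what the paper leaves implicit.
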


\begin{proof}
It follows immediately from Theorem \ref{t22}.
\end{proof}

Theorem \ref{t22} immediately gives a $\frac{3}{2}$-approximation algorithm for coloring
(cap, 4-hole)-free odd-signable graphs.  The algorithm greedily colors a particular ordering
of vertices $v_1,v_2,\ldots, v_n$, where $v_i$ is a vertex of minimum degree in $G[v_1,\ldots,v_i]$.
It is clear that this ordering of vertices can be found in $O(n^2)$ time. 
 Therefore, Theorem \ref{t22} ensures that the greedy algorithm properly colors
the graph using at most $\frac{3}{2}\omega(G)$ colors in $O(n^2)$ time.


\section{Treewidth and clique-width} \label{cwtw}

A {\em triangulation} $T(G)$ of a graph $G$ is obtained from $G$ by adding edges until no
holes remain. Clearly, $T(G)$ is a chordal graph on the same vertex set as $G$ which contains all
edges of $G$. The {\em treewidth} of a graph $G$ is the minimum over all triangulations $T(G)$
of $G$ of the size of the largest clique in $T(G)$ minus 1. Treewidth $k$ is equivalent to having a tree decomposition
of width $k$, which is generally used in algorithms. Bodelander \cite{bod96} gave an algorithm which
for fixed $k$, recognizes graphs of treewidth at most $k$, and constructs a width $k$ tree decomposition;
the algorithm is linear in $n=|V(G)|$ but exponential in $k$.

The {\em clique-width} of a graph $G$ is the minimum
number of labels required to construct $G$ using the following four operations: creating a new vertex with
label $i$, joining each vertex with label $i$ to each vertex with label $j$, changing the label of every
vertex labelled $i$ to $j$, and taking the disjoint union of two labelled graphs. A sequence of
these operations which constructs the graph using at most $k$ labels is called a {\em k-expression} or
{\em clique-width k-expression}.

Treewidth and clique-width have similar algorithmic implications. Problems that can be expressed in monadic
second-order logic $MSO_2$ can be solved in linear time for any class of graphs with treewidth
at most $k$ \cite{courcelle}. Problems that can be expressed in the subset $MSO_1$ of $MSO_2$ which does
not allow quantification over edge-sets can be solved in linear time for any class of graphs of clique-width
at most $k$ \cite{cmr}. Kobler and Rotics \cite{kr} showed
that certain other problems including chromatic number can be solved in polynomial time for any class of graphs of
clique-width at most $k$ \cite{kr}.  All these algorithms are exponential in $k$, and those using clique-width
require a $k$-expression as part of the input. This latter requirement was removed by Oum and Seymour \cite{os}
and then more efficiently by Oum \cite{oum}, who gave an $O(n^3)$ algorithm  which, for any input graph and fixed integer $k$,
finds a clique-width $8^k$-expression or states that the clique-width is greater than $k$.

Corneil and Rotics \cite{cr} improved a result of Courcelle and Olariu \cite{co} to show that
the clique-width of a graph $G$  is at most $3\times 2^{tw(G)-1}$. Further, they construct, in polynomial time, a $k$-expression
of the stated size. Espelage, Gurski, and Wanke \cite{egw} gave an algorithm that takes as input a tree decomposition
with width $k$, and gives a clique-width $2^{O(k)}$-expression \cite{egw} in linear time.

\begin{theorem} \label{tftw5}
Every triangle-free odd-signable graph has treewidth at most 5.
\end{theorem}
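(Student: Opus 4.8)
The plan is to reduce, via clique-cutset decomposition and Theorem~\ref{tfree}, to a graph built from a hole by good ear additions, and then to construct a width-$5$ tree decomposition by induction along that construction.

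\emph{Reductions.} Treewidth is unchanged by clique-cutset decomposition: if $K$ is a clique cutset of $G$ and $C_1,\dots,C_k$ are the components of $G\setminus K$, then $tw(G)=\max_i tw(G[C_i\cup K])$, and each $G[C_i\cup K]$ is a strictly smaller triangle-free odd-signable graph (both classes are closed under taking induced subgraphs). Graphs on at most two vertices, triangle-free complete graphs, and the cube (which is $3$-connected of treewidth $3$) all have treewidth at most $5$. So, by induction on $|V(G)|$, it suffices to treat graphs with at least three vertices, no clique cutset, and not equal to the cube; for such a $G$, being odd-signable, Theorem~\ref{tfree} yields a hole $G_0=H_0$ and good ear additions $G_0,G_1,\dots,G_t=G$.

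\emph{The inductive construction.} For each $i$ I would produce a tree decomposition $\mathcal T_i$ of $G_i$ of width at most $5$ with the extra feature that the attachment site of the \emph{next} ear lies in one bag: if $P_{i+1}$ is added at a hole of $G_i$ with attachments $x_{i+1},z_{i+1}$ and pivot $y_{i+1}$, then some bag of $\mathcal T_i$ contains $\{x_{i+1},y_{i+1},z_{i+1}\}$. The base case is immediate, as a hole has treewidth $2$ and one may place any one prescribed consecutive triple of the hole into a bag of such a decomposition. For the step, let the new good ear be $P\colon x=p_0,p_1,\dots,p_m=z$ with pivot $y$; here $m\ge 2$ since $G$ is triangle-free, the interior $P^\circ=\{p_1,\dots,p_{m-1}\}$ is a path, and by the definition of an ear addition the only edges between $P^\circ$ and $G_{i-1}$ go to $x,y,z$ — namely $p_1x$, $p_{m-1}z$, and (since the ear is good and $y$ is adjacent to $x$ and $z$) at least one edge from $y$ into $P^\circ$. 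Hence $W:=G_i[P^\circ\cup\{x,y,z\}]$ is a fan-type graph — a path together with one vertex $y$ joined to some of its vertices including both endpoints — and therefore has treewidth $2$; enlarging every bag of a width-$2$ tree decomposition of $W$ by $z$ gives a width-$3$ decomposition in which a bag containing the edge $xy$ now contains all of $\{x,y,z\}$. Since $\{x,y,z\}$ is exactly the interface $V(W)\cap V(G_{i-1})$ and no edge of $G_i$ joins $P^\circ$ to $V(G_{i-1})\setminus\{x,y,z\}$, and since $\mathcal T_{i-1}$ has, by the invariant, a bag $B^\star\supseteq\{x,y,z\}$, the standard gluing of these two bags produces a valid tree decomposition $\mathcal T_i$ of $G_i$ of width $\max(\operatorname{width}(\mathcal T_{i-1}),3)\le 5$.

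\emph{The obstacle.} What remains — and is the technical heart — is re-establishing the invariant for $\mathcal T_i$, i.e. making the attachment triple of $P_{i+1}$ appear in a bag. When that triple lies inside $P^\circ\cup\{x,y,z\}$, this can be arranged while choosing the decomposition of $W$ above, which is precisely why the width budget is $5$ rather than $2$: a bag must be able to hold the interface $\{x,y,z\}$ together with up to three further vertices forming the next window. When the triple lies entirely in $G_{i-1}$, however, one needs $\mathcal T_{i-1}$ to have already displayed it; to make this robust, the induction hypothesis must be strengthened to assert the existence, for every potential ear-attachment site of the current graph, of a width-$\le 5$ tree decomposition with a bag at that site, and one must then show this strengthened statement survives a single good ear addition. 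This is where the two ``no wheel'' clauses in the definition of a good ear addition, together with triangle-freeness, come in: they limit how such attachment sites can overlap, so that the bag installed for one site can be reused or locally adjusted for another without any bag exceeding six vertices. Carrying out that bookkeeping carefully is the crux; everything else is the routine gluing described above.
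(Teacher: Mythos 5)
Your reductions (clique cutsets preserve treewidth; the cube and small cases are fine; Theorem~\ref{tfree} applies to each atom) and your local gluing step are sound, but the proof is not complete: you explicitly defer ``re-establishing the invariant,'' and that is not bookkeeping --- it is the entire content of the theorem. Concretely, when the attachment triple $S_{i+1}=\{x_{i+1},y_{i+1},z_{i+1}\}$ of the next ear lies wholly in $G_{i-1}$, nothing in your construction places it in a bag of $\mathcal T_i$. Your proposed repair --- ``for every potential attachment site there is \emph{some} width-$5$ decomposition with a bag at that site'' --- does not close the induction: to extend the decomposition adapted to the site $S_{i+1}$ across the ear $P_i$, that same decomposition must also contain a bag with $S_i$, which your hypothesis does not supply. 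What is really needed is a single decomposition of $G_i$ exhibiting simultaneously all of the triples $S_j$ ($j>i$) that lie in $G_i$, within width $5$; proving that these triples do not pile up in one bag is exactly where triangle-freeness and the two no-wheel clauses of a good ear must be invoked, and you gesture at them without ever using them in a deduction.

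The paper resolves this globally rather than by maintaining an invariant. It fixes the entire ear sequence $P_1,\dots,P_q$ in advance and defines one explicit triangulation $T$ of the atom: for each $i$, make $S_i$ complete to the interior of $P_i$ and add the edge $x_iz_i$, together with an analogous fill of the starting hole. It then proves two claims: (1) each $S_j$ is a clique cutset of $T$ separating $H_j\setminus S_j$ from $P_j\setminus S_j$ --- this is the step your invariant was standing in for, and its proof is where goodness is used: if separation failed, a later hole $H_i$ would meet $S_j$ exactly in $\{x_j,z_j\}$ with $y_i\in\{x_j,z_j\}$, yielding a wheel $(H_i,y_j)$ that contradicts $P_i$ being a good ear; and (2) $T$ is chordal with $\omega(T)\le 6$, using that by triangle-freeness every interior vertex of an ear has at most one neighbour in the earlier graph, so all added edges inside an ear (or inside the initial hole, away from one chosen edge) are short chords, and again the no-wheel conditions to rule out stray type-$1$ edges crossing between different ears. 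If you wish to keep your inductive formulation, the correct invariant is that $\mathcal T_i$ contains, for every $j>i$ with $S_j\subseteq V(G_i)$, a bag containing $S_j$, and the two arguments just described are precisely what is needed to show this survives a good ear addition without any bag exceeding six vertices.
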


\begin{proof}
Let $G^*$ be a triangle-free odd-signable graph.
 Decompose $G^*$ by clique cutsets into atoms that have no clique cutsets.
 We will show that each atom is contained in a chordal graph that has maximum clique size at most 6.
 Gluing these chordal graphs together along the clique cutsets used to decompose $G^*$ gives a
 chordal graph containing $G^*$ with clique size at most 6, and this proves the theorem.

Clearly every cube is contained in a chordal graph with maximum clique of size 5.
So we may assume that an atom $G$ has at least 3 vertices and is not a cube.
Then by Theorem \ref{tfree}, $G$ can be obtained from a hole $H$ by a sequence of good ear additions.
Let $P_1,\dots, P_q$ be the sequence of ears in the construction, with $P_q$ being the last ear added.
For each $i$, let $H_i$ be the hole $P_i$ is attached to, let $x_i$ and $z_i$ be the attachments of $P_i$ in $H_i$,
and let $y_i$ be the common neighbor of $x_i$ and $z_i$ in $H_i$.

We now obtain a triangulation $T$ of $G$ whose largest clique has size at most 6.
We construct $T$ as follows.
For each ear $P_i$, make $x_i, y_i$ and $z_i$ complete to $P_i \setminus \{x_i,z_i\}$, and add the edge $x_iz_i$. Call the edges $x_iz_i$ type 1 edges.
Choose any edge $uv$ of $H$, and join $u$ and $v$ to all the vertices of $H \setminus \{u,v\}$; call these type 2 edges.
For each $i$, let $S_i = \{x_i, y_i, z_i\}$,  $G_i = G[V(H) \cup V(P_1) \cup \dots \cup V(P_i)]$ and $T_i = T[V(H) \cup V(P_1) \cup \dots \cup V(P_i)]$.

\vspace{2ex}

\noindent
{\bf Claim 1:} {\em For $1 \le j \le q$, $S_j$ is a clique cutset in $T$ that separates $H_j \setminus S_j$ from $P_j \setminus S_j$.}
\\
\\
{\em Proof of Claim 1:}
To prove the claim we prove by induction on $i$ the following statement:
{\em for $1\le j \le i \le q$, $S_j$ is a cutset in $T_i$ that separates $H_j \setminus S_j$ from $P_j \setminus S_j$.}

By construction, the statement clearly holds for $i=1$.
Let $i > 1$, and inductively assume that for $1\le j\le i-1$, $S_j$ is a cutset in $T_{i-1}$ that separates $H_j \setminus S_j$ from $P_j \setminus S_j$.
Suppose that for some $j\le i$, $S_j$ is not a cutset in $T_i$ that separates $H_j \setminus S_j$ from $P_j \setminus S_j$.
Then clearly (by construction), $j\le i-1$.
By the induction hypothesis, $S_j$ is a cutset in $T_{i-1}$ that separates $H_j \setminus S_j$ from $P_j \setminus S_j$.
Let $C_{H_j}$ (respectively, $C_{P_j}$) be the connected component of $T_{i-1} \setminus S_j$ that contains $H_j \setminus S_j$ (respectively, $P_j \setminus S_j$).
Then, without loss of generality, $x_i \in C_{H_j}$ and $z_i \in C_{P_j}$. 
Since $H_i$ is a hole of $G_{i-1}$ that contains $x_i$ and $z_i$, it follows that $H_i \cap S_j = \{x_j, z_j\}$.
So, without loss of generality,  $y_i = x_j$. 
But then since $H_i \cup \{y_j\}$ cannot induce a theta in $G_{i-1}$, $(H_i, y_j)$ is a wheel $G_{i-1}$, 
which contradicts the fact that $P_i$ is a good ear. This completes the proof of Claim 1.

\vspace{2ex}

\noindent
{\bf Claim 2:} {\em $T$ is chordal and $\omega (T)\le 6$.}
\\
\\
{\em Proof of Claim 2:}
By Claim 1, it suffices to show that $T[H]$, and for $1\le i\le q$, $T[V(P_i)\cup \{ y_i\}]$ are all chordal and have maximum clique  sizes at most 6.
Let $G_0=G[H]$, and observe that, since $G$ is triangle-free,
for every $1\le i\le q$, every interior vertex of $P_i$ has at most one neighbor in $G_{i-1}$.

Let $H=v_1, \ldots ,v_k,v_1$, and  without loss of generality we assume that $u=v_1$ and $v=v_2$.
Suppose $C$ is a hole contained in $T[H]$. Since, by construction, $\{ u,v\}$ is complete to $V(H)\setminus \{ u,v\}$, $V(C)\cap \{ u,v\} =\emptyset$.
Let $v_i$ be the smallest-indexed vertex of $C$. So $i\ge 3$. It follows that the edges of $C$ are either edges of $H$ or are of type 1 and hence,
by the above observation, are short chords of $H$.
It follows by the choice of $v_i$ that $v_{i+1}$ and $v_{i+2}$ are the neighbors of $v_i$ in $C$. But then $v_{i+1}v_{i+2}$ is a chord of $C$, a contradiction.
Therefore $T[H]$ is chordal.
Since the edges of $T[V(H)\setminus \{ u,v\}]$ are either edges of $H$ or short chords of $H$, $\omega (T[V(H)\setminus \{ u,v\}] )\le 3$,
and hence $\omega (T[V(H)]) \le 5$.

Now consider an ear $P_i$.
Suppose that $T[V(P_i)\cup \{ y_i\}]$ contains a hole $C$. By construction, $S_i$ is a clique of $T[V(P_i)\cup \{ y_i\}]$
that is complete to $V(P_i)\setminus \{ x_i,z_i\}$, and hence $V(C)\cap S_i=\emptyset$.
So $V(C)\subseteq V(P_i)\setminus \{ x_i,z_i\}$. Recall that $P_i$ is a chordless path in $G$, and that every edge of $T[V(P_i)\setminus \{ x_i,z_i\}]$
that is not an edge of $P_i$ is of type 1. Let $x_jz_j$ be such a type 1 edge, and suppose that $y_j$ is not a vertex of $P_i$.
Then $j>i$. By the above observation, $y_j$ cannot be an interior vertex of $P_k$ where $k>i$.
So $y_j$ is a vertex of $G_{i-1}$, and since the only vertex of $G_{i-1}$ that can be adjacent in $G_i$ to two interior vertices of $P_i$ is $y_i$, it follows that
$y_j=y_i$. Let $H'$ be the hole obtained by
augmenting $H_i$ with $P_i$. Then the wheel $(H',y_j)$
is contained in $G_{j-1}$ and
contradicts $P_j$ being a good ear. Therefore $y_j\in V(P_i)$, and so every
type 1 edge of $T[V(P_i)\setminus \{ x_i,z_i\}]$ is a short chord of $P_i$. This contradicts the assumption that $C$ is a hole of $T[V(P_i)\setminus \{ x_i,z_i\}]$.
Hence, $T[V(P_i)\cup \{ y_i\}]$ is chordal.
Since $T[V(P_i)\setminus \{ x_i,z_i\}]$ consists of edges of $P_i$ and short chords of $P_i$, it follows that $\omega (T[V(P_i)\setminus \{ x_i,z_i\}]) \le 3$,
and therefore $\omega ([V(P_i)\cup \{ y_i\}]) \le 6$.
This completes the proof of Claim 2.

\vspace{2ex}

By Claim 2, it follows that there is a triangulation of $G^*$ with clique size at most 6, and hence the treewidth of $G^*$ is at most 5.
\end{proof}

As mentioned above, Corneil and Rotics \cite{cr} proved that the clique-width of a graph $G$  is at most $3\times 2^{tw(G)-1}$,
and so the following is a direct corollary of Theorem \ref{tftw5}.

\begin{corollary} \label{cwtehf}
Triangle-free odd-signable graphs have clique-width at most 48.
\end{corollary}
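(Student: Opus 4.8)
\textbf{Proof proposal for Corollary \ref{cwtehf}.}

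The plan is to simply combine Theorem \ref{tftw5} with the Corneil--Rotics bound \cite{cr} cited in the paragraph immediately preceding the corollary. Let $G$ be an arbitrary triangle-free odd-signable graph. First I would invoke Theorem \ref{tftw5} to conclude $tw(G)\le 5$. Then I would apply the inequality $cwd(G)\le 3\times 2^{\,tw(G)-1}$; since the right-hand side is monotone nondecreasing in $tw(G)$, we get $cwd(G)\le 3\times 2^{5-1}=3\times 16=48$. This is the entire argument, and it also yields (via \cite{cr}) a polynomial-time construction of a $48$-expression for $G$, which will be needed later for the algorithmic sections.

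The only point that needs a moment's care is the degenerate range of small treewidth, where the formula $3\times 2^{tw(G)-1}$ is not literally an integer (when $tw(G)=0$) or is small. If $tw(G)\le 1$ then $G$ is a forest, which has clique-width at most $3$, and if $2\le tw(G)\le 5$ the monotone bound gives at most $48$ in any case; so the conclusion $cwd(G)\le 48$ holds for every triangle-free odd-signable graph regardless. I would phrase the writeup so that this edge case is dispatched in a single clause rather than belabored.

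I do not expect any genuine obstacle here: the substance of the work is entirely in Theorem \ref{tftw5}, and the corollary is a one-line consequence. The only thing to verify carefully is that the hypotheses of the Corneil--Rotics theorem are met (it applies to all graphs, with no extra assumptions) and that the arithmetic $3\times 2^{4}=48$ is correctly transcribed.
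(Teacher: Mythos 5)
Your proposal is correct and is exactly the paper's argument: the corollary is derived by combining Theorem \ref{tftw5} with the Corneil--Rotics bound $3\times 2^{tw(G)-1}$, giving $3\times 2^{4}=48$. The remark about small treewidth is harmless but unnecessary, since the bound is monotone and $tw(G)\le 5$ already yields the claim.
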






\begin{theorem} \label{cwcehf}
If $G$ is (cap, 4-hole)-free odd-signable graph with no clique cutset, then $G$ has clique-width at most 48.
\end{theorem}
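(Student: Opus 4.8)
The plan is to reduce the clique-width bound for $G$ to the clique-width bound for triangle-free odd-signable graphs (Corollary~\ref{cwtehf}) via the structural description in Theorem~\ref{t2}, using the fact that clique-width is well-behaved under the two operations appearing there: blowing up vertices into cliques, and adding a universal clique. First I would handle the degenerate cases not covered by Theorem~\ref{t2}: if $G$ is a complete graph it has clique-width at most $2$, and if $G$ contains no hole then $G$ is chordal, but a chordal graph with no clique cutset is complete, so this case also reduces to the first. Thus we may assume $G$ contains a hole, and Theorem~\ref{t2} applies: there is a (triangle, $4$-hole)-free graph $F$ with at least $3$ vertices and no clique cutset such that $G$ is obtained from $F$ by blowing up vertices into cliques and then adding a universal clique. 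Moreover $F$, being triangle-free with no clique cutset, is odd-signable (as an induced subgraph of the odd-signable graph $G$), so by Corollary~\ref{cwtehf}, $F$ has clique-width at most $48$.

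Next I would argue that each of the two operations preserves the clique-width bound of $48$. For blowing up vertices into cliques: given a $k$-expression for $F$, whenever the expression creates a vertex $u$ with label $i$, I instead create the whole clique $K_u$ by repeatedly creating new vertices with a fresh label $j$, applying the ``join all label-$i$ to all label-$j$'' operation, and then relabeling $j$ to $i$; this needs only one extra label, so blown-up $F$ has clique-width at most $49$ from this naive argument. To get down to $48$ I would instead invoke the standard fact (Courcelle--Olariu) that substituting graphs into the vertices of a graph $H$ yields a graph whose clique-width is at most $\max(\mathrm{cwd}(H), \text{cliquewidths of the substituted graphs})$ when the substituted graphs have clique-width at least $2$; since cliques have clique-width $2 \le 48$, the blown-up graph has clique-width at most $48$. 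For adding a universal clique $K$: again by the substitution/join fact, or directly, take a $k$-expression for the blown-up $F$, prefix it by building $K$ on a disjoint fresh label, join that label to everything, then relabel; adding a universal vertex increases clique-width by at most $1$ in the naive argument but by the substitution lemma (substituting into an edge, or observing that $G$ is obtained from $K_2$ by substitution) the bound $48$ is preserved. I would cite \cite{co} for the precise substitution lemma so that no constant is lost.

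The main obstacle is bookkeeping the constants: the truly naive $k$-expression manipulations each cost $+1$ label, which would push the bound to $50$, so the proof must lean on the sharper substitution lemma of Courcelle and Olariu rather than on hand-built expressions. Concretely, $G$ arises by substituting, into the two vertices of $K_2$, the graph $K$ (the universal clique, a complete graph) on one side and the blown-up $F$ on the other; and the blown-up $F$ arises by substituting cliques $K_u$ into the vertices of $F$. Each substituted graph is either complete (clique-width $\le 2$) or the blown-up $F$; applying the substitution lemma twice gives $\mathrm{cwd}(G) \le \max(\mathrm{cwd}(F), 2) = \mathrm{cwd}(F) \le 48$. The remaining care is to check the edge cases of the substitution lemma (it requires the substituted graphs to be nonempty, and the standard statement wants clique-width at least $2$, which holds since $K_u$ and $K$ have at least one vertex and a single-vertex graph trivially substitutes without cost). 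This yields the claimed bound of $48$.
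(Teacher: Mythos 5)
Your proposal is correct and follows essentially the same route as the paper: reduce to the skeleton $F$ via Theorem~\ref{t2}, bound its clique-width by Corollary~\ref{cwtehf}, and then invoke the Courcelle--Olariu substitution lemma to see that blowing up vertices into cliques and adding a universal clique preserve the bound of $48$. Your extra care with the degenerate (hole-free) case and with the constant bookkeeping matches what the paper does implicitly.
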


\begin{proof}
Let  $G$ be a (cap, 4-hole)-free odd-signable graph with no clique cutset. Let $U$ be the set of universal vertices.
We may assume that $G\setminus U$ contains a hole, since otherwise
$G\setminus U$ is a clique and so its clique-width is 2.
Let $F$ be a maximal induced subgraph of $G\setminus U$ that is triangle-free and has no clique cutset.
By Theorem \ref{t2}, $G\setminus U$ is obtained from $F$ by substituting cliques for vertices of $F$. 
Since $F$ is triangle-free odd-signable, it follows from Corollary \ref{cwtehf} that the clique-width of $F$ is at most 48.
Substituting a graph $G_2$ for a vertex of a graph $G_1$ gives a graph with clique-width at most the maximum of the clique-widths of $G_1$ and $G_2$ \cite{co}, \cite{g}.
A clique of size at least 2 has clique-width 2.
Thus it follows that $G\setminus U$  has clique-width at most 48. Adding a universal vertex to a graph with at least one edge does not change the clique-width.
Thus $G$ has clique-width at most 48.
\end{proof}

\begin{theorem}\label{thm:c4htww}
If $G$ is a (cap, 4-hole)-free odd-signable graph with no clique cutset,
then $G$ has treewidth at most $6 \omega(G)-1$.
\end{theorem}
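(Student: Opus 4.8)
The plan is to combine the structural description of Theorem~\ref{t2} with the treewidth bound of Theorem~\ref{tftw5}, and then do a short computation with clique sizes. If $G$ contains no hole then $G$ is chordal, and a chordal graph with no clique cutset is complete (this is the well-known fact used above), so $G$ has treewidth $\omega(G)-1\le 6\omega(G)-1$; thus assume from now on that $G$ contains a hole. Let $U$ be the set of universal vertices of $G$. No hole of $G$ can pass through a universal vertex, so $G\setminus U$ still contains a hole, and $G\setminus U$ inherits from $G$ the absence of a clique cutset (a clique cutset $K$ of $G\setminus U$ would yield the clique cutset $K\cup U$ of $G$, since $U$ is complete to everything). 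Let $F$ be a maximal induced subgraph of $G\setminus U$ with at least three vertices that is triangle-free and has no clique cutset; such an $F$ exists because a hole of $G\setminus U$ already has these properties. By Theorem~\ref{t2} (applied to $G\setminus U$, exactly as in the proof of Theorem~\ref{cwcehf}), $G\setminus U$ is obtained from $F$ by substituting a clique $K_u$ for each vertex $u$ of $F$, and $G$ is obtained from this by adding the universal clique $U$.

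Now $F$, being an induced subgraph of the odd-signable graph $G$, is triangle-free and odd-signable, so by Theorem~\ref{tftw5} it has a triangulation $T(F)$ with $\omega(T(F))\le 6$. I would then mirror the construction of $G$: let $T'$ be obtained from $T(F)$ by substituting the same clique $K_u$ for each vertex $u$, and let $T$ be obtained from $T'$ by adding $U$ complete to everything. Since $V(T(F))=V(F)$ and $E(F)\subseteq E(T(F))$, the graph $T$ has the same vertex set as $G$ and contains every edge of $G$, so it only remains to check that $T$ is chordal. Substituting cliques into the vertices of a chordal graph preserves chordality: if a chordless cycle of $T'$ met some $K_u$ in two vertices, those two would be consecutive on the cycle and the two other cycle-neighbours of this pair would each be adjacent to both, forcing a chord; hence the cycle meets each $K_u$ at most once and projects to a chordless cycle of the same length $\ge 4$ in $T(F)$, a contradiction. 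Adding $U$ complete to everything again preserves chordality, since no chordless cycle can pass through a universal vertex. So $T$ is a triangulation of $G$.

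Finally I would bound $\omega(T)$. Every clique of $T'$ is contained in $\bigcup_{u\in Q}K_u$ for some clique $Q$ of $T(F)$ with $|Q|\le\omega(T(F))\le 6$, so $\omega(T')\le 6\max_u|K_u|$. Each $K_u$ is a clique of $G\setminus U$, hence $|K_u|\le\omega(G\setminus U)$, and since $U$ is universal, $\omega(G)=\omega(G\setminus U)+|U|$. Therefore
\[
\omega(T)=\omega(T')+|U|\le 6\,\omega(G\setminus U)+|U|=6\bigl(\omega(G)-|U|\bigr)+|U|=6\,\omega(G)-5|U|\le 6\,\omega(G),
\]
so $G$ has treewidth at most $\omega(T)-1\le 6\omega(G)-1$. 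I do not expect a genuine obstacle: the substance is Theorems~\ref{t2} and~\ref{tftw5} together with the two routine facts that clique-substitution and adding a universal clique preserve chordality. The only point that needs care is keeping the multiplicative constant equal to $6$ rather than letting it grow — and this works precisely because the cliques substituted into a single ($6$-element) clique of the width-$5$ triangulation of $F$ each have size at most $\omega(G\setminus U)$, while the universal clique $U$ is exactly absorbed by the identity $\omega(G)=\omega(G\setminus U)+|U|$.
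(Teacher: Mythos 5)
Your proposal is correct and follows essentially the same route as the paper's proof: reduce to $G\setminus U$, take the skeleton $F$ via Theorem~\ref{t2}, triangulate $F$ with clique size at most $6$ via Theorem~\ref{tftw5}, substitute the cliques $K_u$ and re-attach $U$, and bound the resulting clique size by $6(\omega(G)-|U|)+|U|\le 6\omega(G)$. The only difference is that you spell out the (routine) verification that clique substitution and adding a universal clique preserve chordality, which the paper takes for granted.
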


\begin{proof}
Let  $G$ be a (cap, 4-hole)-free odd-signable graph with no clique cutset. We may assume that $G$ contains a hole, since otherwise
$G$ is a clique and so the treewidth of $G$ is $|V(G)|-1=\omega(G)-1$. Let $U$ be the set of universal vertices of $G$, and note that
$G\setminus U$ has no clique cutset.
Let $F$ be a maximal induced subgraph of $G\setminus U$ that is triangle-free and has no clique cutset.
By Theorem \ref{t2}, $G\setminus U$ is obtained from $F$ by, for each vertex $v$, substituting a clique $K_v$. 
Since $F$ is triangle-free odd-signable,  it follows from Theorem \ref{tftw5} that the treewidth of $F$ is at most 5. 
In particular, there is a triangulation $T$ of $F$ with maximum clique size at most 6.
We can obtain a triangulation $T^\prime$ of $G\setminus U$ by substituting the cliques $K_v$ for the vertices $v$ of $T$.
Each of these cliques $K_v$ has size at most $\omega(G)-|U|$, so the size of a largest clique in $T^\prime$ is at most $6(\omega(G)-|U|)$.
We obtain a triangulation $T^{\prime\prime}$ of $G$ by adding to $T^\prime$ the clique $U$ 
and joining every vertex of $U$  to every vertex of $T^\prime$.
The largest clique in $T^{\prime\prime}$ has size at most $6 (\omega(G)-|U|)+|U| = 6 \omega(G)-5|U| \le 6 \omega(G)$. Thus $G$ has treewidth at most
$6 \omega(G)-1$.
\end{proof}

\section{Algorithms for coloring and maximum weight stable set} \label{algorithms}

In this section, we give polynomial-time algorithms for maximum weight stable set,
$q$-coloring (that is, coloring with a fixed number $q$ of colors), and chromatic number for (cap, 4-hole)-free odd-signable graphs (and in particular, (cap, even hole)-free graphs). Our algorithms will take the following general approach.

\begin{framed}
1. Decompose the input graph $G$ via clique cutsets into subgraphs that do not contain clique cutsets.
These subgraphs are called \emph{atoms}. \\

2. Find the solution for each atom using Theorems \ref{t2},  \ref{tftw5}, \ref{cwcehf}, \ref{thm:c4htww}.\\

3. Combine solutions to atoms along the clique cutsets to obtain a solution for $G$.
\end{framed}

\subsection{Clique cutset decomposition}

Let $G=(V,E)$ be a graph and $K\subseteq V$ a clique cutset such that
$G\setminus K$ is a disjoint union of two subgraphs $H_1$ and $H_2$ of $G$.
We let $G_i=H_i\cup K$ for $i=1,2$.
We say that $G$ is \emph{decomposed into} $G_1$ and $G_2$ \emph{via} $K$,
and call this a \emph{decomposition step}.
We then recursively decompose $G_1$ and $G_2$ via clique cutsets until no clique cutset exists.
This procedure can be represented by a rooted binary tree $T(G)$ where $G$ is the root and
the leaves are induced subgraphs of $G$ that do not contain clique cutsets. These subgraphs are called
\emph{atoms} of $G$.  Tarjan \cite{Ta85} showed that for any graph $G$, $T(G)$ can be found in
$O(nm)$ time. Moreover, in each decomposition step Tarjan's algorithm produces an atom,
and consequently $T(G)$ has at most $n-1$ leaves (or equivalently atoms).

Let $k\ge 1$ be a fixed integer.
Tarjan \cite{Ta85} observed that  $G$ is $k$-colorable if and only if each atom of $G$ is $k$-colorable.
This implies that if one can solve $q$-coloring or chromatic number for atoms, then one can also solve these problems
for $G$. It is straightforward to check that once a $k$-coloring of each atom is found, then it takes
$O(n^2)$ time to combine these colorings to obtain a $k$-coloring of $G$.

In a slightly more complicated fashion,  Tarjan \cite{Ta85} showed that once the maximum weight
stable set problem is solved for atoms, one can solve the problem for $G$.
Let $G=(V,E)$ be a graph with a weight function $w:V\rightarrow \mathbb{R}$. For a given subset $S\subseteq  V$,
we let $w(S)=\sum_{v\in S}w(v)$, and denote the maximum weight of a stable set of $G$ by $\alpha_w(G)$.
Suppose that $G$ is decomposed into $A$ and $B$  via a clique cutset $S$, where $A$ is an atom.
We explain Tarjan's approach as follows. To compute a stable set of weight $\alpha_w(G)$, we do the following.

(i) Compute a maximum weight  stable set $I'$ of $A\setminus S$.

(ii) For each vertex $v\in S$, compute a maximum weight  stable set $I_v$ of $A\setminus N[v]$.

(iii) Re-define the weight of $v\in S$ as $w'(v)=w(v)+w(I_v)-w(I')$.

(iv) Compute the maximum weight  stable set $I''$ of $B$ with respect to the new weight $w'$.
If $I''\cap S=\{v\}$, then let $I= I_v\cup I''$; otherwise let $I=I'\cup I''$.

It is easy to see that $\alpha_w(G)=w(I)$. This divide-and-conquer approach can be applied top-down
on $T(G)$ to obtain a solution for $G$ by solving $O(n^2)$ subproblems on induced subgraphs of atoms, as there are
$O(n)$ decomposition steps and each step amounts to solving $O(n)$ subproblems as explained in
(i)-(iv).

Therefore, it suffices to explain below how to solve coloring and maximum weight stable set for atoms of (cap, 4-hole)-free odd-signable graphs.

\subsection{Skeleton}

Let $G$ be a (cap, 4-hole)-free odd-signable graph without clique cutsets.
By Theorem \ref{t2}, $G$ is obtained from a (triangle, 4-hole)-free induced subgraph $F$ that has no clique cutset
by first blowing up vertices $v\in V(F)$ into  cliques $K_v$, and then adding a (possibly empty) universal clique $U$.
We call $F$ the \emph{skeleton} of $G$.

We say that two vertices $u$ and $v$ of $G$ are \emph{true twins} if $N_G[u]=N_G[v]$.
In particular, any pair of true twins are adjacent. It is clear that the binary relation on $V(G)$ defined
by being true twins is an equivalence relation and therefore $V(G)$ can be partitioned into equivalence classes of true twins.
Indeed, the cliques $K_v$ ($v\in V(F)$) and $U$ are equivalence classes of true twins.
Our algorithm relies on finding equivalence classes efficiently.
The following theorem is left as an exercise in \cite{Sp03}. We give a proof.
We say that a vertex $u$
\emph{distinguishes} vertices $v$ and $w$ if $u$ is adjacent to exactly one of $v$ and $w$.

\begin{theorem}\label{thm:twin linear time}
Given a graph $G$ with $n$ vertices and $m$ edges, one can find all equivalence classes of true twins in $O(n+m)$ time.
\end{theorem}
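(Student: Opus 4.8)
The plan is to compute the true-twin partition by a standard partition-refinement argument, using the adjacency lists to split classes one vertex at a time. First I would observe the key fact: two vertices $u,v$ are true twins iff no vertex of $G$ distinguishes them, where ``distinguishes'' is taken with respect to the \emph{closed} neighborhood, i.e.\ $u$ and $v$ are twins iff for every $w \neq u,v$, $w$ is adjacent to both or to neither, \emph{and} $u$ is adjacent to $v$. The adjacency requirement is easy to enforce at the end; the bulk of the work is the refinement that separates vertices distinguished by some $w$.

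Next I would set up the partition refinement. Start with a single class containing all of $V(G)$ (or, to handle the closed-neighborhood subtlety cleanly, pre-split into the class of each vertex together with its own issues — but it is simplest to first compute the \emph{open}-neighborhood twin partition and fix up adjacency afterward). Then, for each vertex $w \in V(G)$ in turn, use $N(w)$ to refine: every current class $C$ is replaced by $C \cap N(w)$ and $C \setminus N(w)$ (discarding empties). Using the classical Hopcroft-style implementation of partition refinement with doubly linked lists, a ``location'' array mapping each vertex to its position, and per-class bookkeeping, one such refinement by the set $N(w)$ costs $O(|N(w)|)$ time: we touch only the vertices in $N(w)$, move each to a newly created sibling class, and only create a new class for those old classes that were actually hit. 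Summing over all $w$ gives $\sum_{w} |N(w)| = 2m$, hence $O(n+m)$ total (the $n$ term covers initialization and iterating over all vertices, including isolated ones).

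After the refinement terminates, each resulting class is a set of vertices with pairwise identical open neighborhoods \emph{outside the class itself}. Within such a class $C$, two vertices $u,v$ have $N(u)\setminus\{u,v\} = N(v)\setminus\{u,v\}$; they are true twins iff additionally $uv \in E(G)$, and in fact one checks that a class with identical open neighborhoods is either a clique (all pairwise adjacent, so all true twins) or a stable set (no pair adjacent, so none are true twins among themselves) — because if $u,v \in C$ are adjacent then every other $x\in C$ has the same neighborhood as $u$ restricted outside $C$, but membership of $v$ forces $x$ adjacent to $v$ as well, propagating adjacency throughout $C$. So the final step is: for each class, test one adjacency (e.g.\ is the first vertex adjacent to the second?) to decide whether it is a clique-class; the clique-classes are exactly the true-twin equivalence classes, and the stable-set classes of size $\ge 2$ get split into singletons. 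This test is $O(1)$ per class if adjacency lists are stored as sorted arrays or we precompute incidence, and there are at most $n$ classes, so this costs $O(n+m)$ overall.

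The main obstacle is the amortized-complexity argument for partition refinement, namely arguing that refining by every set $N(w)$ costs only $O(|N(w)|)$ and not $O(\text{(number of classes)})$: this requires the usual care of never iterating over a class that $N(w)$ does not intersect, which is achieved by, for each $x \in N(w)$, looking up its class, and maintaining a list of ``classes touched during this round'' so that splitting and cleanup are done only for those. Everything else is routine bookkeeping; the subtlety with closed versus open neighborhoods is handled entirely by the post-processing adjacency test described above.
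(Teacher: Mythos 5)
There is a genuine gap, and it is in the choice of open versus closed neighborhoods. With your refinement as literally described (replace every class $C$ by $C\cap N(w)$ and $C\setminus N(w)$ for \emph{every} $w$, including the $w$'s lying inside $C$), two vertices $u,v$ end up in the same final class if and only if $u\in N(w)\Leftrightarrow v\in N(w)$ for all $w$ --- in particular for $w=u$ and $w=v$ --- which is exactly the condition $N(u)=N(v)$. Such a pair is necessarily non-adjacent (otherwise $u\in N(v)=N(u)$), so every final class is a stable set: you have computed the \emph{false}-twin partition. True twins are adjacent and satisfy $N(u)\,\triangle\,N(v)=\{u,v\}$, so they are torn apart at the very step where you refine by $N(u)$ (there $v\in N(u)$ but $u\notin N(u)$). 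Concretely, on $G=K_2$ or $K_3$ your algorithm terminates with all singleton classes and reports no nontrivial true-twin class, which is wrong. The post-processing cannot repair this, because it only inspects pairs \emph{within} a final class, and by that point every true-twin pair already lies in two different classes. (Relatedly, your description of the final classes as having ``identical open neighborhoods outside the class'' is not what the algorithm produces, and your clique-or-stable-set dichotomy --- correct as a statement about the relation $N(u)\setminus\{u,v\}=N(v)\setminus\{u,v\}$ --- is being applied to a partition in which no clique class can ever occur.)

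The fix is precisely the device the paper uses: refine by \emph{closed} neighborhoods $N[w]$, i.e.\ regard every vertex as adjacent to itself. Then $w$ never distinguishes itself from its own neighbors, so true twins survive every refinement step (if $N[u]=N[v]$ then $u\in N[w]\Leftrightarrow w\in N[u]=N[v]\Leftrightarrow v\in N[w]$ for all $w$), while any non-true-twin pair is separated by refining with $N[w]$ for some $w\in N[u]\,\triangle\,N[v]$; the final classes are exactly the true-twin classes and no post-processing is needed. Your Hopcroft-style linked-list implementation and the $\sum_w O(|N[w]|)=O(n+m)$ accounting carry over verbatim to closed neighborhoods, and are in fact more careful than the paper's $2^{i-1}$-labelling bookkeeping (which manipulates $n$-bit keys); but as written your proof establishes the running time for an algorithm that solves the wrong problem.
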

\begin{proof}
Suppose that $V(G)=\{v_1,v_2,\ldots,v_n\}$. We think of each vertex being adjacent to itself, and consequently
any vertex $v$ does not distinguish $v$ and any neighbor of $v$.
The idea is to start with the trivial partition $\mathcal{P}_0=\{V(G)\}$
and obtain a sequence of partitions $\mathcal{P}_1,\ldots, \mathcal{P}_n$ of $V(G)$
such that $\mathcal{P}_i$ is a refinement of $\mathcal{P}_{i-1}$ and is obtained as follows:
 for each set $S\in \mathcal{P}_{i-1}$, we partition $S$ into two subsets
$S'=S\cap N[v_i]$ and $S''=S\setminus S'$, and $\mathcal{P}_{i}=\cup_{S\in \mathcal{P}_{i-1}}\{S',S''\}$.
It can be easily proved by induction that for each $i$ it holds that
(i) any pair of vertices in a set $S\in \mathcal{P}_{i}$ are not distinguished by any of $v_1,\ldots,v_i$;
(ii) vertices from different sets in $\mathcal{P}_{i}$ are distinguished by one of $v_1,\ldots,v_i$.
Therefore, the equivalence classes of true twins are exactly the sets in $\mathcal{P}_n$.

It remains to show that this can be implemented in $O(n+m)$ time.
In the algorithm, we do not actually maintain the sets in a partition.
Instead, we use an array $s[v_i]$ to keep track of which subset $v_i$ belongs to.
Initially, we set $s[v_i]=0$ for all $i$ and this takes $O(n)$ time.
Then we do the following:
for each $1\le i\le n$, we  set $s[u]=s[u]+2^{i-1}$ for each $u\in N[v_i]$.
Clearly, this takes $\sum_{v_i}O(d(v_i))=O(m)$ time.
In the end, we group vertices with the same $s$-value by scanning the array once and this takes $O(n)$ time.
Therefore, the total running time is $O(n+m)$.
\end{proof}

In the following algorithms, we assume that $G$ is the input graph.
We first use Tarjan's algorithm to find $T(G)$ in $O(nm)$ time.
For any atom $A$ of $G$, we let $n_A$ and $m_A$ be the number of vertices
and the number of edges of $A$, respectively.
By Theorem \ref{thm:twin linear time} we can find the skeleton $F$ of $A$, $K_v$ and $U$
in $O(n_A+m_A)$ time.   Therefore, it takes $O(n)O(n+m)=O(nm)$ time to find
skeletons for all atoms of $G$.
So, we fix an atom $A$ and assume that the skeleton $F$ of $A$, $K_v$ ($v\in V(F)$) and $U$ are given.

\subsection{Solving chromatic number using clique-width}

It follows from Theorem \ref{tftw5} and Theorem \ref{cwcehf} that
$F$ has treewidth at most $5$ and clique-width at most $48$.
We first find a tree decomposition of $F$ with width at most $5$ in linear time  by Bodlaendar \cite{bod96},
and then feed this decomposition into the algorithm of Espelage, Gurski, and Wanke \cite{egw}
which outputs in linear time a $k$-expression of $F$ for some constant $k$ ($k$ could be larger than $48$).
Then we construct from $F$, $K_v$ ($v\in V(F)$) and $U$ in linear time a $k$-expression of  $G$ \cite{co}.
Finally, we find the chromatic number of $A$ in polynomial time by Kobler and Rotics \cite{kr}.
We solve chromatic number for every atom of $G$ in this way. 
The total running time is dominated by Kobler and Rotics's algorithm
\cite{kr} which runs in $O(2^{3k+1}k^2n^{2^{2k+1}+1})$ time.

\subsection{Solving $q$-coloring using treewidth}

We first find the clique number $\omega(G)$ of $G$ in $O(nm)$ time \cite{actv}.
If $\omega(G)>q$, then $G$ is not $q$-colorable, and we are done.
Otherwise, $\omega(G)\le q$ and so every atom $A$ also has $\omega(A)\le q$.
By  Theorem \ref{thm:c4htww},
the treewidth of $A$ is at most $6q-1$. We then use  Bodelander's algorithm \cite{bod96}
to find a tree decomposition with width $6q-1$ in $O(n_A)$ time. Finally,
$q$-coloring can be solved in $O(n_A)$ time for $A$ \cite{bk, courcelle}.
Since there are $O(n)$ atoms, the running time for find all colorings of atoms
is $O(n)O(n)=O(n^2)$. Recall that combining colorings of atoms can also
be done in $O(n^2)$ time, and so the total running time is dominated by
finding $T(G)$ and skeletons, that is, $O(nm)$.

\subsection{Solving maximum weight stable set using treewidth}

For maximum weight stable set, we let $v'\in K_v$ be the vertex with maximum weight
among vertices in $K_v$. Similarly, if $U\neq \emptyset$ then let $u'\in U$ be the vertex with largest weight among vertices
in $U$. Let $F'=\{u'\}\cup \{v':v\in V(F)\}$ if $U\neq \emptyset$,
and $F'=\{v':v\in V(F)\}$ if $U=\emptyset$.
Note that $F'$ is obtained from $F$ by adding at most one universal vertex. Moreover, the maximum weight
of a stable set in $A$ equals  the maximum weight of a stable set in $F'$.
It follows from Theorem \ref{tftw5} that $F$ has treewidth at most $5$,
and so $F'$ has treewidth at most $6$.
Let $S_A$ be the clique cutset used in the decomposition step that yields $A$, and
$n'_A=|V(A\setminus S_A)|$.
Recall that, for each atom $A$,
we need to solve $O(n)$ subproblems
on induced subgraphs of $A\setminus S_A$.
Each such subproblem can be solved in $O(n'_A)$ time by first finding
 a tree decomposition with width at most $6$ in $O(n'_A)$ time by Bodelander \cite{bod96},
and then solving maximum weight stable set  in $O(n'_A)$ time \cite{bk}.
Note that for two different atoms $A$ and $B$, the subgraphs of $A$ for which the subproblems
need to be solved
are vertex-disjoint
from the subgraphs of $B$ for which the subproblems need to be solved.
This implies that it takes $O(n)\sum_{A}n_A=O(n^2)$ time to
solve all these subproblems, where the summation goes over all atoms of $G$.
So, the total running time is dominated by finding $T(G)$ and skeletons, that is, $O(nm)$.
\\

An important feature of our algorithms is that they are \emph{robust} in the sense that we do
not need to assume that the input graph is (cap, 4-hole)-free odd-signable. Our algorithms
either report that the graph is not (cap, 4-hole)-free odd-signable or solve the problems
(in which case the input graph may or may not be (cap, 4-hole)-free odd-signable): for any input graph $G$,
we find the skeleton $F$ of any atom $A$ of $G$ and test if $F$ has treewidth at most $5$. If the answer
is no, then $G$ is not (cap, 4-hole)-free odd signable by Theorem \ref{tftw5};
otherwise we use the above algorithms to solve coloring or maximum weight stable set.

\subsection{Recognition}

Even-hole-free graphs were first shown to be recognizable in polynomial time in \cite{cckv-ehfrecognition}.
Currently, the fastest known recognition algorithm for this class has complexity $O(n^{11})$ \cite{cl}.
In \cite{cckv-tf}, an $O(n^4)$ algorithm is given for recognizing triangle-free odd-signable graphs
(and in particular (triangle, even hole)-free graphs).
In \cite{cckv-cap} an $O(n^6)$ algorithm is given for recognizing cap-free odd-signable graphs
(and in particular (cap, even hole)-free graphs).
We now show how to do this in $O(n^5)$-time.


\begin{lemma}\label{lem:detect cap}
There is an $O(nm^2)$ time algorithm to decide if a graph contains a cap.
\end{lemma}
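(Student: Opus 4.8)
The plan is to reduce cap-detection to a local search anchored at the unique chord of the cap. Recall that a cap consists of a hole $H$ together with a vertex $x$ having exactly two neighbors $a,b$ in $H$, with $ab\in E(G)$. The triangle $\{x,a,b\}$ together with the edge $ab$ is the distinguished "roof" of the cap; the rest of the cap is a chordless $ab$-path $P$ of length at least $4$ through $V(H)\setminus\{a\}$ (equivalently, of length at least $3$ after removing $a$), internally disjoint from and nonadjacent to $\{x\}$, and with $a$ having no neighbor on the interior of $P$. So the first step is: enumerate all triangles $\{x,a,b\}$ of $G$ — there are at most $O(nm)$ of them, and they can be listed in $O(nm)$ time — and for each such triple in the role $(x;a,b)$, test whether $G$ contains a suitable chordless $ab$-path.

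The second step is the path test. Fix $\{x,a,b\}$. Delete $x$ and all of its neighbors except $a$ and $b$, and also delete every neighbor of $a$ other than $b$ and the candidate second endpoint of $P$; more precisely, let $G'=G\setminus\bigl(N(x)\setminus\{a,b\}\bigr)$, and within $G'$ we want a chordless $ab$-path $P$ of length $\ge 3$ whose interior vertices are nonadjacent to $a$. The cleanest way is: for each vertex $c\in N(a)\setminus N[x]$ with $c\ne b$ (this $c$ will be the neighbor of $a$ on the hole on the far side from $b$), search in $G''=G'\setminus\bigl((N(a)\setminus\{b,c\})\cup\{a\}\bigr)$ for a chordless $cb$-path of length $\ge 2$ whose vertices are all nonadjacent to $a$ (automatic, since we deleted $N(a)$) and nonadjacent to $x$ (automatic). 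A chordless path of length $\ge 2$ between two nonadjacent vertices can be found by BFS in $G''$ in $O(n+m)$ time, and if $c$ and $b$ are adjacent in $G''$ one simply looks one more step out. Taking such a path $P'$ and prepending $a$ gives the path $P$; together with $x$ this yields the cap $H\cup\{x\}$ where $H=P\cup\{a\}$ — one must check $|V(H)|\ge 5$, but since $P$ has length $\ge 3$ this holds. Ranging over $O(n)$ choices of $c$ for each of the $O(nm)$ triangles, and $O(n+m)=O(m)$ per BFS, gives the claimed $O(nm\cdot n\cdot m)$... which is $O(n^2m^2)$, too slow, so the loop over $c$ must be folded into a single search.

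The fix, and the main obstacle, is to avoid the extra factor of $n$ from iterating over $c$: for a fixed triangle $\{x,a,b\}$ we want, in a single $O(m)$ pass, to decide whether there is \emph{any} chordless $ab$-path of length $\ge 4$ avoiding $N(x)\setminus\{a,b\}$ and with no interior vertex adjacent to $a$. Equivalently, in the graph $G^\star = (G\setminus(N(x)\setminus\{a,b\})) \setminus b$ we seek a path from some neighbor $c$ of $a$ (with $c\notin N(x)$) to $b$'s neighborhood that avoids $N(a)$ after the first step — this is exactly asking whether $b$ has a neighbor in a connected component of $G^\star \setminus N[a]$ that also touches $N(a)\setminus N(x)$, which is a single BFS/DFS computation in $O(n+m)$ time, and chordality of the extracted path is then obtained by the standard trick of taking a \emph{shortest} such path (a shortest path in a graph is automatically induced if we first contract... ) — more carefully, one extracts a shortest path and then shows any chord would give a shorter one, using that all relevant forbidden adjacencies ($N(x)$, $N(a)$ on the interior) have already been removed from the host graph. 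So each triangle costs $O(n+m)=O(m)$, and with $O(nm)$ triangles the total is $O(nm^2)$. I expect the delicate point to be the bookkeeping that guarantees the path we extract is chordless \emph{and} that no interior vertex is adjacent to $a$ (so that $a$ really has exactly two neighbors, $x$'s two neighbors on $H$ are exactly $a,b$, and the chord $ab$ of $H\cup\{x\}$ is the \emph{only} chord) — this is where one must be careful to delete the right neighborhoods before searching rather than checking conditions after the fact, and to argue that a shortest path in the reduced graph inherits chordlessness in $G$.
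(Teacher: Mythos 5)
Your high-level strategy coincides with the paper's: anchor the search at an (oriented) triangle $\{x,a,b\}$, where $x$ is the cap vertex and $ab$ the chord, spend $O(n+m)$ on a single connectivity test per triangle, and multiply by the $O(nm)$ triangles to get $O(nm^2)$. The gap sits exactly where you flag "the delicate point": the per-triangle test you propose is unsound. You delete $N(x)\setminus\{a,b\}$ and the vertex $b$, and then accept $(x;a,b)$ whenever some component $C$ of $G^\star\setminus N[a]$ both contains a neighbour of $b$ and touches $N(a)\setminus N(x)$. But you never delete the common neighbourhood $N(a)\cap N(b)$, so the walk $a,c,(\mbox{path in }C),b$ that this certificate assembles may carry the chord $cb$ (and chords from $b$ into the interior), and the shortcutting you invoke can collapse the cycle all the way down to the triangle $a,c,b$. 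A concrete false positive: take $V=\{x,a,b,c,z\}$ with edges $xa,xb,ab,ac,bc,cz,bz$. This graph has no hole at all, hence no cap; yet for the triangle $(x;a,b)$ your test answers yes, since $C=\{z\}$ is a component of $G^\star\setminus N[a]$ that is adjacent to $b$ and adjacent to $c\in N(a)\setminus N(x)$. No check you describe catches this.

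The missing idea --- which also removes the awkward asymmetry between $a$ and $b$ that forces your detour through the extra vertex $c$ and the component-touching condition --- is to delete $N(a)\cap N(b)$ as well. The paper's test is: form $G'=G\setminus\bigl((N[x]\setminus\{a,b\})\cup(N(a)\cap N(b))\cup\{ab\mbox{-edge}\}\bigr)$ and check whether $a$ and $b$ are still connected, by one breadth-first search. Forward direction: the hole of a cap minus its chord survives all three deletions, since $x$ has no third neighbour on it and no interior vertex is a common neighbour of $a$ and $b$. Backward direction: a shortest surviving $ab$-path is induced in $G'$, cannot have length $1$ or $2$ (those are destroyed by deleting the edge $ab$ and the set $N(a)\cap N(b)$), so it closes with the edge $ab$ into a hole of length at least $4$ whose only neighbours of $x$ are the adjacent pair $a,b$ --- a cap. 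Each triangle then costs one search and the total is $O(m)\cdot O(n)\cdot O(n+m)=O(nm^2)$. Two smaller slips in your write-up: a cap's hole may have length $4$, so the chordless $ab$-path you want has length at least $3$, not $4$ (your own later bounds are the correct ones); and it is cleaner to delete $N[x]\setminus\{a,b\}$ rather than $N(x)\setminus\{a,b\}$ so that $x$ itself is out of the way from the start.
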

\begin{proof}
We first guess an edge $e=uv$ and a vertex $w$ such that $w$ is the vertex that is adjacent to $u$ and $v$
which are in a hole not containing $w$. Clearly, there are $m$ choices for $e$ and at most $n$ choices for $w$.
We then test if $u$ and $v$ are in the same component of
$G'=G\setminus ((N[w]\setminus \{ u,v\} ) \cup (N(u)\cap N(v))\cup \{ e\} )$.
This can be done in $O(n+m)$ time using breadth-first search.
Therefore, the total running time is $O(m)O(n)O(n+m)=O(nm^2)$.
The correctness follows from the fact that
if there is a cap that consists of a hole $H$ going through $u$ and $v$, and the vertex $w$ that is not on $H$,
then there must exist a $uv$-path in $G'$.
\end{proof}


\begin{lemma}\label{lem:odd-signable=ehf}
Let $G$ be a graph that contains a universal vertex $u$. Then $G$ is odd-signable if and only if $G\setminus \{ u\}$ is even-hole-free.
\end{lemma}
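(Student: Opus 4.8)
The plan is to prove both directions by relating holes of $G\setminus\{u\}$ to even holes and to signable structures in $G$. The key observation is that adding a universal vertex $u$ to a graph $H=G\setminus\{u\}$ controls exactly which induced substructures can appear: every hole of $G$ lies entirely in $H$ (since $u$, being universal, has more than two neighbors on any cycle of length $\ge 4$ and hence cannot be a vertex of an induced hole), and conversely every hole of $H$ is still an induced hole of $G$. So $G$ and $H$ have the same set of holes. What changes is the wheel/theta/prism picture: $u$ together with any hole $C$ of $H$ forms the wheel $(C,u)$, and this wheel has exactly $|V(C)|$ sectors, so it is an even wheel precisely when $C$ is an even hole.

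For the forward direction, suppose $G$ is odd-signable. I would use Theorem~\ref{thm:forbid odd-signable}: $G$ contains no even wheel, theta, or prism. In particular, for every hole $C$ of $H=G\setminus\{u\}$, the wheel $(C,u)$ is not an even wheel, which forces $|V(C)|$ to be odd. Hence every hole of $H$ is odd, i.e.\ $H$ is even-hole-free.

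For the converse, suppose $H=G\setminus\{u\}$ is even-hole-free; I must show $G$ is odd-signable, again via Theorem~\ref{thm:forbid odd-signable}, i.e.\ $G$ contains no even wheel, theta, or prism. First, $G$ contains no theta and no prism: a theta or prism is an induced subgraph on $\ge 6$ vertices each of which has degree $2$ or $3$ inside it, so a universal vertex of $G$ cannot participate in it (it would be adjacent to all of the configuration), hence any theta or prism of $G$ lies in $H$; but a theta contains an even hole (two of its three paths whose lengths have the same parity form one), and a prism contains a hole through all three paths together with each pair forming a hole, one of which is even — in either case $H$ would contain an even hole, a contradiction. For even wheels: let $(C,x)$ be a wheel in $G$ with an even number of sectors. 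If $x\ne u$, then since $u$ is universal it is adjacent to all of $V(C)\cup\{x\}$; but then $u$ together with the hole $C$ (or a suitable hole among $V(C)\cup\{x\}$) again creates forbidden structure inside — more carefully, $C$ is a hole of $H$, so $C$ is odd, and one checks that an odd hole $C$ with a vertex $x$ having an even number of sectors is exactly the situation that produces a theta or prism or even wheel on $V(C)\cup\{x\}\subseteq V(H)$, contradicting that $H$ is even-hole-free (the standard parity argument: partitioning an odd cycle into an even number of paths forces two of them to have lengths of the same parity, yielding an even hole through $x$). If $x=u$, then $C$ is a hole of $H$, hence odd, so $(C,u)$ has an odd number of sectors, not even — contradiction. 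Thus $G$ has no even wheel, theta, or prism, so $G$ is odd-signable.

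The main obstacle is the even-wheel case with hub $x\ne u$: one must argue that an even wheel $(C,x)$ living inside the even-hole-free graph $H$ is impossible. I expect to handle this by the classical fact (implicit in the Truemper-style analysis behind Theorem~\ref{thm:forbid odd-signable}) that in an even-hole-free graph every wheel has an odd number of sectors — equivalently, even-hole-free graphs are odd-signable, so $H$ itself contains no even wheel. Granting that, the case $x\ne u$ is immediate since such a wheel would already be in $H$. The only genuinely new content is the bookkeeping that a universal vertex cannot be an interior vertex of a hole, theta, or prism and contributes exactly $|V(C)|$ sectors when it is the hub of a wheel; everything else is an application of Theorem~\ref{thm:forbid odd-signable} to $G$ and to $H$.
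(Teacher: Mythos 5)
Your proof is correct and follows essentially the same route as the paper's: both directions reduce to Theorem~\ref{thm:forbid odd-signable}, using that a universal vertex together with any hole $C$ of $G\setminus\{u\}$ forms a wheel with exactly $|V(C)|$ sectors, that thetas and prisms contain no universal vertex and always contain an even hole, and that an even-hole-free graph contains no even wheel (so the case of a hub $x\neq u$ is already excluded inside $G\setminus\{u\}$). The paper's proof is simply a terser version of this argument; the only wrinkle in yours is the parenthetical parity claim for even wheels (an even number of sectors on an odd rim forces some sector of \emph{even} length, hence an even sector-hole, rather than ``two sectors of the same parity''), but this is immaterial since you also invoke the standard fact, stated explicitly in Section~\ref{odd-signable}, that even wheels cannot occur in even-hole-free graphs.
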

\begin{proof}
Suppose that $G$ is odd-signable. Then $G$ does not contain thetas, prisms or even wheels by Theorem \ref{thm:forbid odd-signable}.
The fact that $u$ is universal implies that $G\setminus \{ u\}$ is even-hole-free.
Conversely, if $G\setminus \{u\}$ is even-hole-free, then clearly $G$ has no even wheels.
Furthermore, $G\setminus \{u\}$ has no prisms and thetas, as it is even-hole-free.
Note that  thetas and prisms do not contain universal vertices,  and so adding a universal vertex to $G\setminus \{u\}$
does not create a theta or prism. This shows that $G$ is odd-signable by Theorem \ref{thm:forbid odd-signable}.
\end{proof}


\begin{theorem}
There exists an $O(n^5)$ time algorithm to decide if a graph is (cap, 4-hole)-free odd-signable
(resp. (cap, even hole)-free).
\end{theorem}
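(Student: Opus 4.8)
The plan is to combine the cap detector of Lemma~\ref{lem:detect cap}, Tarjan's clique cutset decomposition, and the structural description of (cap, 4-hole)-free graphs from Theorem~\ref{t2} in its true-twin form. First run the $O(nm^2)$ algorithm of Lemma~\ref{lem:detect cap}; since $m=O(n^2)$ this costs $O(n^5)$, and if a cap is found we answer ``no''. From now on $G$ is cap-free, and it remains to decide whether $G$ is 4-hole-free odd-signable (resp.\ even-hole-free). Decompose $G$ by clique cutsets into atoms using Tarjan's $O(nm)$ algorithm, obtaining at most $n-1$ atoms, each an induced subgraph of $G$. A $C_4$, a theta, a prism, an even wheel, and an even hole are all $2$-connected and have no clique cutset, hence each such configuration, if present in $G$, lies inside a single atom; therefore $G$ is 4-hole-free odd-signable (resp.\ even-hole-free) if and only if every atom is. So it suffices to decide the property for one atom $A$.

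Processing an atom $A$: if $A$ is chordal then, having no clique cutset, it is complete, hence even-hole-free and odd-signable, and we accept. Otherwise $A$ contains a hole. Compute the set $U$ of universal vertices of $A$ and, using Theorem~\ref{thm:twin linear time}, the partition of $A\setminus U$ into true-twin classes; let $F$ be the quotient graph (one vertex per class, two classes adjacent when complete to each other). This takes $O(n_A+m_A)$ time, and $A\setminus U$ is automatically the blow-up of $F$ into these (clique) true-twin classes, while $A$ is $A\setminus U$ together with the universal clique $U$. If $F$ has fewer than $3$ vertices, or is not triangle-free, or is not 4-hole-free, or has a clique cutset, reject: if $A$ were in the class then by Theorem~\ref{t2} (note that a $2$-connected triangle-free graph on at least $3$ vertices has no universal vertex, so $U$ is exactly the universal clique of Theorem~\ref{t2} and $F$ is the skeleton) the extracted $F$ would be triangle-free, 4-hole-free, on at least $3$ vertices, and without clique cutset. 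Otherwise $F$ is a (triangle, 4-hole)-free graph on at least $3$ vertices with no clique cutset; now, in the (cap, 4-hole)-free odd-signable problem, test whether $F$ is triangle-free odd-signable if $U=\emptyset$ (using the $O(n^4)$ recognition algorithm of~\cite{cckv-tf}) and whether $F$ is even-hole-free if $U\neq\emptyset$; in the (cap, even hole)-free problem, always test whether $F$ is even-hole-free. Accept $A$ if and only if the relevant test succeeds.

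Correctness of the atom test rests on the converse half of Theorem~\ref{t2} together with two elementary observations about $A=(\text{blow-up of }F)+U$. (a) A hole of the blow-up of $F$ meets each true-twin clique in at most one vertex, so it projects to a hole of $F$ of the same length; consequently blowing up vertices into cliques preserves the set of hole-lengths and preserves odd-signability (transfer an odd signing of $F$ by weighting all clique-internal edges $0$). (b) A universal vertex lies on no hole, so adjoining the universal clique $U$ does not change which holes occur. From (a) and (b): $A$ is even-hole-free iff $F$ is; if $U=\emptyset$ then $A$ is odd-signable iff $F$ is; and if $U\neq\emptyset$ then by Lemma~\ref{lem:odd-signable=ehf} $A$ is odd-signable iff $A\setminus\{u\}$ (for $u\in U$) is even-hole-free, which by (a),(b) holds iff $F$ is even-hole-free. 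Combined with the converse of Theorem~\ref{t2} (blowing up a (triangle, 4-hole)-free graph on at least $3$ vertices with no clique cutset and adding a universal clique gives a (cap, 4-hole)-free graph with no clique cutset), this shows the atom test is correct, hence so is the whole algorithm, which moreover never assumes $G$ lies in the class and is therefore robust. For the running time: the cap test is $O(nm^2)=O(n^5)$, Tarjan's decomposition is $O(nm)$, and over the at most $n-1$ atoms (each on at most $n$ vertices) the per-atom work---true-twin partitioning, the cheap structural checks on $F$, and one call to the $O(n^4)$ triangle-free odd-signable / even-hole-free recognizer---is polynomial and sums to $O(n^5)$. Hence the algorithm runs in $O(n^5)$ time.

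The step I expect to be the main obstacle is making the robustness argument airtight: one must verify carefully that, when $A$ is a genuine member, the computed $U$ really equals the universal clique of Theorem~\ref{t2} and the computed $F$ really equals the skeleton (so that the cheap checks cannot wrongly reject), and conversely that passing all the cheap checks plus the sub-recognition genuinely certifies membership via observations (a), (b) and the converse of Theorem~\ref{t2}. A secondary point is that the auxiliary recognition task for (triangle, even hole)-free graphs (invoked when $U\neq\emptyset$, and in the even-hole-free variant) must itself fit in the $O(n^5)$ budget; this is fine because $F$ is triangle-free, a setting in which even-hole detection is easy.
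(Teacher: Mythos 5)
Your algorithm is essentially the paper's: detect caps with Lemma~\ref{lem:detect cap}, decompose by clique cutsets with Tarjan's algorithm, extract the universal clique $U$ and the true-twin quotient $F$ of each atom via Theorem~\ref{thm:twin linear time}, and reduce to the $O(n^4)$ triangle-free odd-signable / (triangle, even hole)-free recognizers using Lemma~\ref{lem:odd-signable=ehf}. The only structural deviation is that you drop the paper's upfront $O(n^4)$ brute-force 4-hole test and instead catch 4-holes through the per-atom check that $F$ is 4-hole-free; this is legitimate, since a 4-hole has no clique cutset (so it survives intact into an atom), contains no universal vertex and no pair of true twins, and hence projects to a 4-hole of the quotient.

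One step as written would fail, though. In observation (a) you justify ``blowing up preserves odd-signability'' by extending an odd signing of $F$ with weight $0$ on all clique-internal edges. Under the paper's definition a triangle is a chordless cycle and must receive odd weight, and your signing gives weight $0$ to every triangle inside a blown-up clique $K_v$ with $|K_v|\ge 3$, and weight $2\sigma(uv)\equiv 0$ to a triangle with two vertices in $K_u$ and one in $K_v$. The claim you need is true --- it is exactly the paper's own unproved assertion that adding true twins preserves odd-signability --- but the clean argument goes through Theorem~\ref{thm:forbid odd-signable}: no two vertices of a theta, prism or even wheel have the same closed neighborhood inside that configuration, so such a configuration in the blow-up meets each twin class at most once and projects to an identical configuration in $F$. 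This claim is load-bearing only in the $U=\emptyset$ branch of the odd-signable variant (the $U\neq\emptyset$ branch and the even-hole-free variant need only your hole-length preservation, which is correctly argued), so the repair is local. Everything else, including the robustness discussion and the $O(n^5)$ accounting, matches the paper.
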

\begin{proof}
Let $G$ be a graph. We first test if $G$ contains a 4-hole using brute force, and this takes $O(n^4)$ time.
If $G$ contains a 4-hole, then we stop. Therefore, we now assume that $G$ is 4-hole-free.
Secondly, we apply Lemma \ref{lem:detect cap} to see if $G$ contains a cap in $O(nm^2)=O(n^5)$ time.
If $G$ contains a cap, we stop. So, we may assume that $G$ is (cap, 4-hole)-free.

We then apply Tarjan's algorithm to find the clique cutset decomposition tree $T(G)$ in $O(nm)$ time.
Note that $G$ is odd-signable if and only if each atom is. For each atom $A$, we find its skeleton
$F$, $K_v$ for $v\in V(F)$ and $U$ in $O(n+m)$ time by Theorem \ref{thm:twin linear time}.
If $U=\emptyset$, then $A$ is odd-signable  if and only if $F$ is odd-signable, since adding twin vertices
preserves being odd-signable; if $U\neq \emptyset$, i.e., $A$ contains a universal vertex,
then it follows from Lemma \ref{lem:odd-signable=ehf} that $A$ is odd-signable if and only if
$F$ is even-hole-free. We finally apply the $O(n^4)$ time recognition algorithm from \cite{cckv-tf}
for triangle-free odd-signable graphs or (triangle, even hole)-free graphs to $F$ depending on whether
$U$ is empty or not. If the algorithm returns no for the skeleton $F$ of some atom,
then $G$ is not odd-signable; otherwise $G$ is odd-signable. The running time for testing all atoms
is $O(n)O(n^4)=O(n^5)$. Therefore, the total running time for recognizing (cap, 4-hole)-free odd-signable graphs is $O(n^5)$.
Similarly, (cap, even hole)-free graphs can be recognized with the same time complexity.
\end{proof}

\section{Open Problems} \label{problems}

The bound given by Corollary \ref{cor:binding} is attained by odd holes
and the Haj\'os graph (see Figure \ref{hajos}).
Note that these graphs have clique number at most $3$. For graphs
with large clique number, we do not have an example
showing that the bound is tight. Nevertheless, the optimal constant
is at least $\frac{5}{4}$. For any integer $k\ge 1$, let $G_k$ be the graph obtained from a $5$-hole by
substituting a clique of size $2k$ for each vertex of the $5$-hole.
Clearly, $|V(G_k)|=10k$, $\alpha(G_k)=2$ and $\omega(G_k)=4k$.
Hence, $\chi(G_k)\ge \frac{|V(G_k)|}{\alpha(G_k)}=5k$.
Moreover, it is easy to see that $G_k$ does admit a $5k$-coloring.
So,  $\chi(G_k)=5k=\frac{5}{4}\omega(G_k)$.
A natural question is  whether or not  one can reduce the constant from $\frac{3}{2}$ to $\frac{5}{4}$.


\begin{figure}
\center
\tikzstyle{every node}=[circle, draw, inner sep=0pt, minimum width=4pt]
\begin{tikzpicture}[scale=0.4]
\node [draw, circle] (v1) at (0,4) {};
\node [draw, circle] (v2) at (4,1) {};
\node [draw, circle] (v3) at (2,-3) {};
\node [draw, circle] (v4) at (-2,-3) {};
\node [draw, circle] (v5) at (-4,1) {};
\draw (v1) edge (v2);
\draw (v2) edge (v3);
\draw (v3) edge (v4);
\draw (v4) edge (v5);
\draw (v5) edge (v1);
\node [draw, circle] (v6) at (7,1) {};
\node [draw, circle] (v7) at (-7,1) {};
\draw (v1) edge (v6);
\draw (v6) edge (v3);
\draw (v4) edge (v7);
\draw (v7) edge (v1);
\draw (v2) edge (v6);
\draw (v5) edge (v7);
\end{tikzpicture}
\caption{\label{hajos} The Haj\'os graph.}
\end{figure}
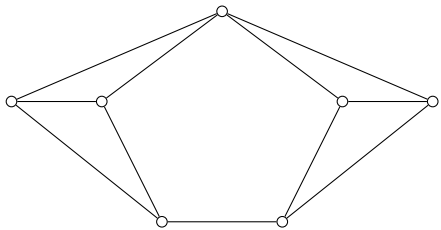


\vspace{0.5cm}

\noindent
{\bf Problem:} {\em Is it true that $\chi(G)\le \lceil\frac{5}{4}\omega(G)\rceil$ for every (cap, even hole)-free graph $G$?}
\\

It was shown in \cite{cks} that this is true for the class of $(C_4,P_5)$-free graphs, 
which is a subclass of the class of (cap, even hole)-free graphs.

Even-hole-free graphs are also known to be $\chi$-bounded.
In \cite{achrs} it is shown that every even-hole-free graph has a vertex whose neighborhood is a union of two
cliques. This implies that if $G$ is even-hole-free, then $\chi (G)\leq 2\omega (G)-1$.
It remains open whether a better bound is possible.

The complexity of 3-coloring, $q$-coloring, and minimum coloring is unknown
for even-hole-free graphs, 4-hole-free odd-signable graphs, and odd-signable graphs.
Polynomial-time algorithms for minimum coloring have been given for (diamond, even hole)-free graphs \cite{kmv}
and (pan, even hole)-free graphs \cite{cch}.

The clique covering problem is to find a minimum number of cliques which partition the vertices of a graph.
This problem is the same as finding a minimum coloring of the complementary graph.
The complexity of this problem is unknown for the following classes of graphs:
(cap, even hole)-free graphs, (cap, 4-hole)-free odd-signable graphs, 4-hole-free odd-signable graphs, even-hole-free graphs,
and odd-signable graphs.\\

\textbf{Acknowledgement}. We would like to thank Haiko M{\"u}ller for fruitful discussions,
and to thank Jerry Spinrad for pointing to us the idea of finding true twins in $O(n+m)$ time.


\end{document}